\newcommand{\goes}{\rightarrow}
\newcommand{\st}{\;|\;}
\newcommand{\hard}{H}
\newcommand{\soft}{S}
\newcommand{\optimizeQFLIA}                {\textit{optimize\_QF\_LIA}\xspace}
\newcommand{\optimizeQFLIAMaxSMT}          {\textit{optimize\_QF\_LIA\_Max\_SMT}\xspace}
\newcommand{\optimizeQFLIAMaxSMTthreshold} {\textit{optimize\_QF\_LIA\_Max\_SMT\_threshold}\xspace}
\newcommand{\optimizeQFLIAOMT}             {\textit{optimize\_QF\_LIA\_OMT}\xspace}
\newcommand{\relaxDomainsMinModels}        {\textit{new\_domains\_min\_models}\xspace}
\newcommand{\relaxDomainsMinModelsNonInc}   {\textit{new\_domains\_min\_models\_non\_inc}\xspace}
\newcommand{\artificialBounds}             {\textit{artificial\_bounds}\xspace}
\newcommand{\linearize}                    {\textit{linearize}\xspace}
\newcommand{\relaxDomainsCores}            {\textit{new\_domains\_cores}\xspace}
\newcommand{\update}                       {\textit{update}\xspace}
\newcommand{\updateNonInc}                 {\textit{update\_non\_inc}\xspace}
\newcommand{\solveSMTQFNIAMinModels}       {\textit{solve\_SMT\_QF\_NIA\_min\_models}\xspace}
\newcommand{\solveSMTQFNIACores}           {\textit{solve\_SMT\_QF\_NIA\_cores}\xspace}
\newcommand{\solveSMTQFLIA}                {\textit{solve\_SMT\_QF\_LIA}\xspace}
\newcommand{\solveOMTQFLIA}                {\textit{solve\_OMT\_QF\_LIA}\xspace}
\newcommand{\solveMaxSMTQFNIA}             {\textit{solve\_Max\_SMT\_QF\_NIA}\xspace}
\renewcommand{\b}[1]{{\textbf{#1}}}
\newcommand{\mr}[3]  {{\multirow{#1}{#2}{#3}}}
\newcommand{\Weight}               {\Omega}
\newcommand{\weight}               {\omega}
\renewcommand{\l}               {L}
\renewcommand{\u}               {U}
\newcommand{\msc}    {\mathit{max\_soft\_cost}}
\newcommand{\undef}  {\bot}
\newcommand{\bsf}    {\mathit{best\_so\_far}}
\newcommand{\bounds} {\mathit{B}}
\newcommand{\model}  {\mathit{M}}
\newcommand{\cost}   {\mathit{cost}}
\newcommand{\Sat}    {\mathsf{SAT}}
\newcommand{\Unsat}  {\mathsf{UNSAT}}
\newcommand{\Unknown}{\mathsf{UNKNOWN}}
\newcommand{\Optimal}{\mathsf{OPT}}
\newcommand{\sat}   {SAT\xspace}           
\newcommand{\maxsat}{Max-SAT\xspace}       \newcommand{\MAXSAT}{Max-SAT}
\newcommand{\smt}   {SMT\xspace}           \newcommand{\SMT}   {SMT}
\newcommand{\maxsmt}{Max-SMT\xspace}       \newcommand{\MAXSMT}{Max-SMT}
\newcommand{\omt}   {OMT\xspace}           \newcommand{\OMT}   {OMT}
\newcommand{\smtrat}    {{\sf SMT-RAT}\xspace}
\newcommand{\limplies}{\rightarrow}
\renewcommand{\comment}[1]{}
\newcommand{\integers}{\mathbb{Z}}
\newcommand{\reals}{\mathbb{R}}
\newcommand{\ea}{$\exists \integers\, \forall \reals$-NIRA\xspace}
\newcommand{\rasat}[1]        {\textsf{raSAT#1}\xspace}
\newcommand{\prob}            {\textsf{ProB}\xspace}
\newcommand{\cvc}[1]          {\textsf{CVC#1}\xspace}
\newcommand{\z}[1]            {\textsf{z#1}\xspace}
\newcommand{\yices}[1]        {\textsf{yices-#1}\xspace}
\newcommand{\aproveNia}       {\textsf{AProVE-NIA}\xspace}
\newcommand{\barcelogicCores} {\textsf{bcl-cores}\xspace}
\newcommand{\barcelogicMaxsmt}{\textsf{bcl-maxsmt}\xspace}
\newcommand{\barcelogicOmt}   {\textsf{bcl-omt}\xspace}
\newcommand{\barcelogicJmpNoCores} {\textsf{bcl-ninc}\xspace}
\newcommand{\barcelogicJmpCores}   {\textsf{bcl-ninc-cores}\xspace}
\newcommand{\tool}[1]         {\textsf{#1}\xspace}
\def\defemb#1#2{\expandafter\def\csname #1\endcsname
                   {\relax\ifmmode #2\else\hbox{$#2$}\fi}}
\begin{document}

\markboth{Cristina Borralleras et al.}{Incomplete \smt Techniques for Solving Non-Linear Formulas over the Integers}

\title{Incomplete \smt Techniques for Solving Non-Linear Formulas\\ over the Integers}
\author{CRISTINA BORRALLERAS
\affil{Universitat de Vic - Universitat Central de Catalunya}
DANIEL LARRAZ,
ALBERT OLIVERAS,\\
ENRIC RODR\' IGUEZ-CARBONELL and
ALBERT RUBIO
\affil{Universitat Polit\`ecnica de Catalunya}}

\begin{abstract}
  We present new methods for solving the Satisfiability
  Modulo Theories problem over the theory of Quantifier-Free
  Non-linear Integer Arithmetic, SMT(QF-NIA), which consists in
  deciding the satisfiability of ground formulas with integer
  polynomial constraints. Following previous work, we propose to solve
  SMT(QF-NIA) instances by reducing them to linear arithmetic:
  non-linear monomials are linearized by abstracting them with fresh
  variables and by performing case splitting on integer variables with
  finite domain. For variables that do not have a finite domain, we
  can artificially introduce one by imposing a lower and an upper
  bound, and iteratively enlarge it until a solution is found (or the
  procedure times out).

  The key for the success of the approach is to determine, at each
  iteration, which domains have to be enlarged. Previously,
  unsatisfiable cores were used to identify the domains to be changed,
  but no clue was obtained as to how large the new domains should be.
  Here we explain two novel ways to guide this process by analyzing
  solutions to optimization problems: (i) to minimize the number of
  violated artificial domain bounds, solved via a Max-SMT solver, and
  (ii) to minimize the distance with respect to the artificial
  domains, solved via an Optimization Modulo Theories (OMT) solver.
  Using this SMT-based optimization technology allows smoothly
  extending the method to also solve Max-SMT problems over non-linear
  integer arithmetic. Finally we leverage the resulting
  Max-SMT(QF-NIA) techniques to solve $\exists \forall$ formulas in a
  fragment of quantified non-linear arithmetic that appears commonly
  in verification and synthesis applications.
\end{abstract}

%

 \begin{CCSXML}
<ccs2012>
<concept>
<concept_id>10002950.10003705.10003707</concept_id>
<concept_desc>Mathematics of computing~Solvers</concept_desc>
<concept_significance>500</concept_significance>
</concept>
<concept>
<concept_id>10003752.10003790.10002990</concept_id>
<concept_desc>Theory of computation~Logic and verification</concept_desc>
<concept_significance>500</concept_significance>
</concept>
<concept>
<concept_id>10003752.10003790.10003794</concept_id>
<concept_desc>Theory of computation~Automated reasoning</concept_desc>
<concept_significance>300</concept_significance>
</concept>
<concept>
<concept_id>10010147.10010148.10010149.10010157</concept_id>
<concept_desc>Computing methodologies~Equation and inequality solving algorithms</concept_desc>
<concept_significance>500</concept_significance>
</concept>
<concept>
<concept_id>10010147.10010148.10010149.10010159</concept_id>
<concept_desc>Computing methodologies~Theorem proving algorithms</concept_desc>
<concept_significance>300</concept_significance>
</concept>
</ccs2012>
\end{CCSXML}

\ccsdesc[500]{Mathematics of computing~Solvers}
\ccsdesc[500]{Theory of computation~Logic and verification}
\ccsdesc[300]{Theory of computation~Automated reasoning}
\ccsdesc[500]{Computing methodologies~Equation and inequality solving algorithms}
\ccsdesc[300]{Computing methodologies~Theorem proving algorithms}

%
%

\keywords{Non-linear arithmetic, satisfiability modulo theories}

\acmformat{
Cristina Borralleras, Daniel Larraz, Albert Oliveras, Enric Rodr\'{\i}guez-Carbonell, and
Albert Rubio, 2016.
Incomplete SMT techniques for solving non-linear formulas over the integers.}

\begin{bottomstuff}
  All authors partially supported by Spanish Ministerio de
  Econom\'{\i}a y Competitividad under grants
  TIN2013-45732-C4-3-P (MINECO) and TIN2015-69175-C4-3-R (MINECO/FEDER).
  Albert Oliveras is partially funded by the European Research Council (ERC) under the European
  Union's Horizon 2020 research and innovation programme (grant
  agreement ERC-2014-CoG 648276 AUTAR).

  Author's addresses:
  C. Borralleras,
  Departament Tecnologies Digitals i de la Informaci\'o, Escola Polit\`ecnica Superior, Universitat de Vic - Universitat Central de Catalunya;
  Daniel Larraz,
  Albert Oliveras,
  Enric Rodr\'{\i}guez-Carbonell and Albert Rubio, Departament of Computer Science, Universitat Polit\`ecnica de Catalunya.
\end{bottomstuff}

\maketitle

\section{Introduction}
\label{sec:introduccion}

Polynomial constraints are pervasive in computer science. They appear
naturally in countless areas, ranging from the analysis, verification
and synthesis of software and hybrid systems
\cite{Colonetal2003CAV,SankaranarayananSM04,SankaranarayananSM08,DBLP:conf/cade/PlatzerQR09}
to, e.g., game theory \cite{Beyene:2014:CAS:2535838.2535860}. In all
these cases, it is crucial to have efficient automatic solvers that,
given a formula involving polynomial constraints with integer or real
variables, either return a solution to the formula or report that
there is none.

Therefore, it is of no surprise that solving this sort of non-linear
formulas has attracted wide attention over the years. A milestone
result of Tarski's \cite{Tarski51} is a constructive proof that the
problem is decidable for the first-order theory of real closed fields,
in particular for the real numbers. Unfortunately the algorithm in the
proof has non-elementary complexity, i.e., its cost cannot be bounded
by any finite tower of exponentials, and is thus essentially useless
from a practical point of view. For this reason, for solving
polynomial constraints in $\reals$, computer algebra has traditionally
relied on the more workable approach of cylindrical algebraic
decomposition (CAD)
\cite{DBLP:conf/automata/Collins75,BasPolRoy_AlgInRealAlgGeom_2003}.
Still, its applicability is hampered by its doubly exponential
complexity.

Due to the limitations of the existing techniques, further research
has been carried out in polynomial constraint solving, spurred in the
last decade by the irruption of \sat and its extensions
\cite{HandbookOfSAT2009,NieuwenhuisOliverasTinelli2006JACM}. Thus,
several techniques have emerged in this period which leverage the
efficiency and automaticity of this new technology. E.g., for solving
polynomial constraints in $\reals$, interval constraint propagation
has been integrated with \sat and \smt engines
\cite{DBLP:journals/jsat/FranzleHTRS07,DBLP:conf/fmcad/GaoGIGSC10,DBLP:journals/entcs/KhanhO12}.
Other works pre-process non-linear formulas before passing them to an
off-the-shelf \smt solver for quantifier-free linear real arithmetic
\cite{DBLP:conf/fmcad/GanaiI09}, or focus on particular kinds of
constraints like convex constraints \cite{DBLP:conf/fmcad/NuzzoPSS10}.
In the implementation of many of these approaches computations are
performed with floating-point arithmetic. In order to address the
ever-present concern that numerical errors can result in incorrect
answers, the framework of $\delta$-complete decision procedures has
been proposed \cite{DBLP:conf/cade/GaoAC12,DBLP:conf/cade/GaoKC13}. In
another line of research, as opposed to numerically-driven approaches,
symbolic techniques from algebraic geometry such as the aforementioned
CAD \cite{JovanovicMoura2012IJCAR}, Gr\"obner bases
\cite{DBLP:conf/cai/JungesLCA13,DBLP:conf/dagstuhl/PassmoreMJ10},
Handelman's representations \cite{DBLP:conf/vmcai/MarechalFKMP16} or
virtual substitution \cite{DBLP:conf/fct/CorziliusA11} have been
successfully adapted to \sat and \smt. As a result, several libraries
and toolboxes have been made publicly available for the development of
symbolically-driven solvers
\cite{DBLP:conf/sat/CorziliusLJA12,DBLP:conf/cade/MouraP13,DBLP:conf/sat/CorziliusKJSA15}.

On the other hand, when variables have to take integer values, even
the problem of solving a single polynomial equation is undecidable
(Hilbert's 10th problem, \cite{Cooper2004}). Despite this theoretical
limitation, and following a similar direction to the real case,
several incomplete methods that exploit the progress in \sat and \smt
have been proposed for dealing with integer polynomial constraints.
The common idea of these approaches is to reduce instances of this
kind of formulas into problems of a simpler language that can be
straightforwardly handled by existing \sat/\smt systems, e.g.,
propositional logic \cite{DBLP:conf/sat/FuhsGMSTZ07}, linear
bit-vector arithmetic \cite{DBLP:conf/lpar/ZanklM10} or linear integer
arithmetic \cite{Borrallerasetal2011JAR}. All these techniques are
oriented towards satisfiability, which makes them convenient in
applications where finding solutions is more relevant than proving
that none exists (e.g., in verification when generating ranking
functions \cite{DBLP:conf/fmcad/LarrazORR13}, invariants
\cite{DBLP:conf/vmcai/LarrazRR13} or other inductive properties
\cite{DBLP:conf/cav/LarrazNORR14,DBLP:conf/fmcad/BrockschmidtLOR15}).

In this article\footnote{This is the extended version of the
  conference paper presented at SAT\,'14
  \cite{DBLP:conf/sat/LarrazORR14}.} we build upon our previous method
\cite{Borrallerasetal2011JAR} for deciding \SMT(QF-NIA) by reduction
to \SMT(QF-LIA). In that work, non-linear monomials are linearized by
abstracting them with fresh variables and by performing case splitting
on integer variables with finite domain. In the case in which
variables do not have finite domains, artificial ones are introduced
by imposing a lower and an upper bound. While the underlying
\SMT(QF-LIA) solver cannot find a solution (and the time limit has not
been exceeded yet), \emph{domain relaxation} is applied: some domains
are made larger by weakening the bounds. To guide which bounds have to
be relaxed from one iteration to the following one, unsatisfiable
cores are employed: at least one of the artificial bounds that appear
in the unsatisfiable core should be weaker. Unfortunately, although
unsatisfiable cores indicate which bounds should be relaxed, they
provide no hint on how large the new domains have to be made. This is
of paramount importance, since the size of the new linearized formula (and
therefore the time needed to determine its satisfiability) can
increase significantly depending on the number of new cases that must
be added.

A way to circumvent this difficulty could be to find alternative
techniques to the unsatisfiable cores which, when a solution with the
current domains cannot be found, provide more complete information for
the domain relaxation. In this paper we propose such alternative
techniques. The key idea is that an assignment of numerical values to
variables that is ``closest'' to being a true solution (according to
some metric) can be used as a reference as regards to how one should
enlarge the domains. Thus, the models generated by the \SMT(QF-LIA)
engine are put in use in the search of solutions of the original
non-linear problem, with a similar spirit to
\cite{DBLP:journals/entcs/MouraB08} for combining theories or to the
model-constructing satisfiability calculus of
\cite{MouraJovanovic2013VMCAI}.

However, as pointed out above, in our case we are particularly
interested in \emph{minimal models}, namely those that minimize a cost
function that measures how far assignments are from being a true
solution to the non-linear problem. Minimal models have long been
studied in the case of propositional logic
\cite{DBLP:journals/amai/Ben-EliyahuD96,DBLP:journals/ai/Ben-Eliyahu-Zohary05,DBLP:conf/ecai/SohI10}.
In \smt, significant advancements have been achieved towards solving
the optimization problems of \emph{Maximum Satisfiability Modulo
  Theories} (\maxsmt,
\cite{NieuwenhuisOliveras2006SAT,CimattiFranzenGriggioSebastianiStenico2010TACAS})
and \emph{Optimization Modulo Theories} (\omt,
\cite{Vendrell2012MasterThesis,DBLP:journals/tocl/SebastianiT15b}).
Thanks to this research, several \smt systems are currently offering
optimization functionalities
(\cite{DBLP:conf/cav/SebastianiT15a,DBLP:conf/popl/LiAKGC14,DBLP:conf/tacas/BjornerPF15}).
Taking advantage of these recent developments, in this paper we make
the following contributions:

\begin{longenum}

\item In the context of solving \SMT(QF-NIA), we present different
  heuristics for guiding the domain relaxation step by means of the
  analysis of minimal models. More specifically, we consider two
  different cost functions:

  \begin{itemize}

  \item the number of violated artificial domain bounds (leading to
    \maxsmt problems);

  \item the distance with respect to the artificial domains (leading
    to \omt problems).

  \end{itemize}

  We evaluate these model-guided heuristics experimentally with an
  exhaustive benchmark set and compare them with other techniques for
  solving \SMT(QF-NIA). The results of this evaluation show the
  potential of the method.

\item Based on the results of the aforementioned experiments,
  we extend our best approach for \SMT(QF-NIA) to handle problems in
  \MAXSMT(QF-NIA).

\item Finally we apply our \MAXSMT(QF-NIA) techniques to solve \smt
  and \maxsmt problems in the following fragment of quantified
  non-linear arithmetic: $\exists \forall$ formulas where $\exists$
  variables are of integer type and $\forall$ variables are of real
  type, and non-linear monomials cannot contain the product of two
  real variables. Formulas of this kind appear commonly in
  verification and synthesis applications \cite{Dutertre:smt2015}, for
  example in control and priority synthesis
  \cite{ChengShankarRuessBensalem2013}, reverse engineering of
  hardware \cite{DBLP:conf/fmcad/GasconSDTJM14} and program synthesis
  \cite{DBLP:conf/cade/TiwariGD15}.

\end{longenum}

This paper is structured as follows. Section \ref{sec:preliminaries}
reviews basic background on \smt, \maxsmt and \omt, and also on our
previous approach in \cite{Borrallerasetal2011JAR}. In Section
\ref{sec:domain_relaxation} two different heuristics for guiding the
domain relaxation step are presented, together with experiments and
several possible variants. Then Section \ref{sec:maxsmt_nia}
proposes an extension of our techniques from \SMT(QF-NIA) to
\MAXSMT(QF-NIA). In turn, in Section \ref{sec:maxsmt_ea} our
\MAXSMT(QF-NIA) approach is applied to solving \maxsmt problems with
$\exists \forall$ formulas. Finally, Section \ref{sec:conclusions}
summarizes the conclusions of this work and sketches lines for future
research.

\section{Preliminaries}
\label{sec:preliminaries}

\subsection{\smt, \maxsmt and \omt}
\label{sec:smt_maxsmt_omt}

Let $\cP$ be a fixed finite set of \emph{propositional variables}.  If
$p \in \cP$, then $p$ and $\lnot p$ are \emph{literals}.  The
\emph{negation} of a literal $l$, written $\lnot l$, denotes $\lnot p$
if $l$ is $p$, and $p$ if $l$ is $\lnot p$.  A \emph{clause} is a
disjunction of literals $l_1 \lor\cdots\lor l_n$. A (CNF)
\emph{propositional formula} is a conjunction of clauses
$C_1 \land\cdots\land C_n$. The problem of \emph{propositional
  satisfiability} (abbreviated \sat) consists in, given a
propositional formula, to determine whether it is
\emph{satisfiable}, i.e., if it has a \emph{model}: an assignment of
Boolean values to variables that satisfies the formula.

The \emph{satisfiability modulo theories (SMT)} problem is a
generalization of \sat. In \smt, one has to decide the satisfiability
of a given (usually, quantifier-free) first-order formula with respect
to a background theory. In this setting, a model (which we may also
refer to as a \emph{solution}) is an assignment of values from the
theory to variables that satisfies the formula.
Examples of theories are \emph{quantifier-free linear
  integer arithmetic (QF-LIA)}, where literals are linear inequalities
over integer variables, and the more general \emph{quantifier-free
  non-linear integer arithmetic (QF-NIA)}, where literals are
polynomial inequalities over integer variables. Unless otherwise
stated, in this paper we will assume that variables are all of integer
type.

Another generalization of \sat is \emph{\MAXSAT}
\cite{DBLP:series/faia/LiM09}, which extends the problem by asking for
more information when the formula turns out to be unsatisfiable:
namely, the \maxsat problem consists in, given a formula $F$, to find
an assignment such that the number of satisfied clauses in $F$ is
maximized, or equivalently, that the number of falsified clauses is
minimized. This problem can in turn be generalized in a number of
ways. For example, in \emph{weighted \maxsat} each clause of $F$ has a
\emph{weight} (a positive natural or real number), and then the goal
is to find the assignment such that the \emph{cost}, i.e., the sum of
the weights of the falsified clauses, is minimized. Yet a further
extension of \maxsat is the \emph{partial weighted \maxsat} problem,
where clauses in $F$ are either weighted clauses as explained above,
called \emph{soft clauses} in this setting, or clauses without
weights, called \emph{hard clauses}. In this case, the problem
consists in finding the model of the hard clauses such that the sum of
the weights of the falsified soft clauses is minimized. Equivalently,
hard clauses can also be seen as soft clauses with infinite weight.

The problem of \emph{\MAXSMT} merges \maxsat and \smt, and is defined
from \smt analogously to how \maxsat is derived from \sat. Namely, the
\emph{Max-SMT} problem consists in, given a set of pairs
$\{ [C_1,\weight_1], \ldots,[C_m,\weight_m] \}$, where each $C_i$ is a
clause and $\weight_i$ is its weight (a positive number or infinity),
to find an assignment that minimizes the sum of the weights of the
falsified clauses in the background theory. As in \smt, in this
context we are interested in assignments of values from the theory to
variables.

Finally, the problem of \emph{Optimization Modulo Theories (OMT)} is
similar to \maxsmt in that they are both optimization problems, rather
than decision problems. It consists in, given a formula $F$ involving
a particular numerical variable called $\mathit{cost}$, to find the
model of $F$ such that the value assigned to $\mathit{cost}$ is
minimized. Note that this framework allows one to express a wide
variety of optimization problems (maximization, piecewise linear
functions, etc.).

\subsection{Solving \SMT(QF-NIA) with Unsatisfiable Cores}
\label{sec:unsatisfiable_cores}

In \cite{Borrallerasetal2011JAR}, we proposed a method for solving
\SMT(QF-NIA) problems based on encoding them into \SMT(QF-LIA). The basic
idea is to linearize each non-linear monomial in the formula by
applying a case analysis on the possible values of some of its
variables. For example, if the monomial $x^2 yz$ appears in the input
QF-NIA formula and $x$ must satisfy $0\leq x \leq 2$, we can
introduce a fresh variable $v_{x^2yz}$, replace the occurrences of
$x^2 yz$ by $v_{x^2 yz}$ and add to the clause set the following three
\emph{case splitting clauses}:

%
\begin{center}
  $
  \begin{array}{llll}
    x=0 & \,\limplies\, & v_{x^2yz}=0 \, ,\\
    x=1 & \,\limplies\, & v_{x^2yz}=yz \, ,\\
    x=2 & \,\limplies\, & v_{x^2yz}=4yz \, .\\
  \end{array}
  $
\end{center}

In turn, new non-linear monomials may appear, e.g., $yz$ in this
example. All non-linear monomials are handled in the same way until a
formula in QF-LIA is obtained, for which efficient decision procedures
exist
\cite{DutertreMoura2006CAV,DBLP:journals/fmsd/DilligDA11,DBLP:journals/jsat/Griggio12}.

Note that, in order to linearize a non-linear monomial, there must be
at least one variable in it which is both lower and upper bounded.
When this property does not hold, new \emph{artificial} domains can be
introduced for the variables that require them (for example, for
unbounded variables one may take $\{-1, 0, 1\}$). In principle, this
implies that the procedure is no longer complete, since a linearized
formula with artificial bounds may be unsatisfiable while the original
QF-NIA formula is actually satisfiable. A way to overcome this problem
is to proceed iteratively: variables start with bounds that make the
size of their domains small, and then the domains are enlarged on
demand if necessary, i.e., if the formula turns out to be
unsatisfiable. The decision of which bounds are to be relaxed is
heuristically taken based on the analysis of an \emph{unsatisfiable
  core} (an unsatisfiable subset of the clause set) that is obtained
when the solver reports unsatisfiability. There exist many techniques
in the literature for computing unsatisfiable cores (see, e.g.,
\cite{AchaEtAlAICOM10} for a sample of them). In
\cite{Borrallerasetal2011JAR} we employed the well-known simple and
effective approach of \cite{ZhangMalikDATE2003}, consisting in writing
a trace on disk and extracting a resolution refutation, whose leaves
form an unsatisfiable core. Note that the method tells \emph{which}
bounds should be weakened, but does not provide any guidance in regard
to \emph{how large} the change on the bounds should be. This is
critical, as the size of the formula in the next iteration (and so the
time required to determine its satisfiability) can grow significantly
depending on the number of new case splitting clauses that have to be
added. Therefore, in lack of a better strategy, a typical heuristic is
to decrement or increment the bound (for lower bounds and for upper
bounds, respectively) by a constant value.

\begin{algorithm}[t]
\begin{lstlisting}
status $\solveSMTQFNIACores$(formula $F_0$) { // returns whether $F_0$ is satisfiable
  $\bounds$ = $\artificialBounds$($F_0$);                                  // $\bounds$ are artificial bounds enough to linearize $F_0$
  $F$ = $\linearize$($F_0$, $\bounds$);
  while (not timed_out()) {
    $\langle \status, \core \rangle$ = $\solveSMTQFLIA$($F$, $\bounds$);         // unsatisfiable core $\mathit{UC}$ computed here for simplicity
    if ($\status$ == $\Sat$) return $\Sat$;                        // is $F \land \bounds$ satisfiable?
    else if ($\bounds \cap \core$ == $\emptyset$) return $\Unsat$;
    else {
      $\bounds'$ = $\relaxDomainsCores$($\bounds$, $\core$);               // at least one in the intersection is relaxed
      $F$ = $\update$($F$, $\bounds$, $\bounds'$);                                 // add case splitting clauses
      $\bounds$ = $\bounds'$;
    }
  }
  return $\Unknown$;
}
\end{lstlisting}
\caption{Algorithm for solving \SMT(QF-NIA) with unsatisfiable cores}
\label{alg:solve_SMT_QF_NIA_cores}
\end{algorithm}

\begin{algorithm}[t]
\begin{lstlisting}
set $\artificialBounds$(formula $F_0$) {          // returns the artificial bounds for linearization
  $S$ = choose_linearization_variables($F_0$);  // choose enough variables to linearize $F_0$
  $B$ = $\emptyset$;                                    // set of artificial bounds
  for ($V$ in $S$) {
    if (lower_bound($V$, $F_0$) $\,$== $\bot$)             // cannot find lower bound of $V$ in $F_0$
       $B$ = $B$ $\cup$ { $V \geq L$};                          // for a parameter $L$, e.g. $L = -1$
    if (upper_bound($V$, $F_0$) == $\bot$)             //  cannot find upper bound of $V$ in $F_0$
       $B$ = $B$ $\cup$ { $V \leq U$};                          // for a parameter $U$, e.g. $U = 1$
  }
  return $B$;
}
\end{lstlisting}
\caption{Procedure \artificialBounds}
\label{alg:artificial_bounds}
\end{algorithm}

\begin{algorithm}[t]
\begin{lstlisting}
formula $\linearize$(formula $F_0$, set $\bounds$) {              // returns the linearization of $F_0$
  $\nonlinear$ = nonlinear_monomials($F_0$);
  $F = F_0$;
  while ($\nonlinear \neq \emptyset$) {
    let $Q$ in $\nonlinear$;                                     // non-linear monomial to be linearized next
    $V_Q$ = fresh_variable();
    $F$ = replace($Q$, $F$, $V_Q$);                             // replace all occurrences of $Q$ in $F$ by $V_Q$
    $C$ = $\emptyset$;                                          // clauses of the case splitting
    $V$ = linearization_variable($Q$);                  // choose a finite domain variable in $Q$ to linearize
    for ($K$ in [lower_bound($V$, $F_0 \cup \bounds$), upper_bound($V$, $F_0 \cup \bounds$)])
      $C$ = $C$ $\cup$ { $V = K \limplies V_Q =$ evaluate($Q$, $V$, $K$)};
    $F$ = $F$ $\cup$ $C$;
    $\nonlinear$ = $\nonlinear$ - $\{Q\}$ $\cup$ nonlinear_monomials($C$);      // new non-linear monomials may be introduced
  }
  return $F$;
}
\end{lstlisting}
\caption{Procedure \linearize}
\label{alg:linearize}
\end{algorithm}

\begin{algorithm}[t]
\begin{lstlisting}
set $\relaxDomainsCores$(set $\bounds$, set $\core$) {                        // returns the new set of artificial bounds
  let $S \subseteq \bounds \cap \core$ such that $S \neq \emptyset$;
  $\bounds' = \bounds$;
  for ($V\geq L$ in $S$) $\hspace{0.1mm}$ $\bounds'$ $\!$= $\bounds'$ - {$V \geq L$} $\cup$ {$V \geq L'$};     // e.g. $L' = L - K_L$ $\hspace{0.1mm}$ for a parameter $K_L > 0$
  for ($V\leq U$ in $S$) $\bounds'$ = $\bounds'$ - {$V \leq U$} $\cup$ {$V \leq U'$}; $\!$    // e.g. $U' = U + K_U$ for a parameter $K_U > 0$
  return $\bounds'$;
}
\end{lstlisting}
\caption{Procedure \relaxDomainsCores}
\label{alg:relax_domains_cores}
\end{algorithm}

\begin{algorithm}[t]
\begin{lstlisting}
formula $\update$(formula $F$, set $\bounds$, set $\bounds'$) { // adds cases when relaxing the bounds from $\bounds$ to $\bounds'$
  $F'$ = $F$;
  for ($V$ $\;$such that$\;$ $V\geq L$ in $\bounds$ $\;$and$\;$ $V\geq L'$ in $\bounds'$ $\;$and$\;$ $L \neq L'$)
    for ($K$ in [$L'$, $L-1$])
       for ($Q$ such that $V$ == linearization_variable($Q$))       // $V$ was used to linearize monomial $Q$
          $F'$ = $F'$ $\cup$ { $V = K \limplies V_Q =$ evaluate($Q$, $V$, $K$)};     // $V_Q$ is the variable standing for $Q$

  for ($V$ $\;$such that$\;$ $V\leq U$ in $\bounds$ $\;$and$\;$ $V\leq U'$ in $\bounds'$ $\;$and$\;$ $U \neq U'$)
    for ($K$ in [$U+1$, $U'$])
       for ($Q$ such that $V$ == linearization_variable($Q$))
          $F'$ = $F'$ $\cup$ { $V = K \limplies V_Q =$ evaluate($Q$, $V$, $K$)};

  return $F'$;
}
\end{lstlisting}
  \caption{Procedure \update}
  \label{alg:update}
\end{algorithm}

Procedure \solveSMTQFNIACores in Algorithm
\ref{alg:solve_SMT_QF_NIA_cores} describes more formally the overall
algorithm from \cite{Borrallerasetal2011JAR} for solving \SMT(QF-NIA).
First, the required artificial bounds are computed (procedure
\artificialBounds, with pseudo-code in Algorithm
\ref{alg:artificial_bounds}). Then the linearized formula (procedure
\linearize, with pseudo-code in Algorithm \ref{alg:linearize})
together with the artificial bounds are passed to an \SMT(QF-LIA)
solver (procedure \solveSMTQFLIA), which tests if their conjunction is
satisfiable \footnote{Note that, in this formulation, the
  linearization consists of the clauses of the original formula after
  replacing non-linear monomials by fresh variables, together with the
  case splitting clauses. On the other hand, it does \emph{not}
  include the artificial bounds, which for the sake of presentation
  are kept as independent objects. }. If the solver returns $\Sat$, we
are done. If the solver returns $\Unsat$, then an unsatisfiable core
is also computed. If this core does not contain any of the artificial
bounds, then the original non-linear formula must be unsatisfiable,
and we are done too. Otherwise, at least one of the artificial bounds
appearing in the core must be chosen for relaxation (procedure
\relaxDomainsCores, with pseudo-code in Algorithm
\ref{alg:relax_domains_cores}). Once the domains are enlarged and the
appropriate case splitting clauses are added (procedure \update, with
pseudo-code in Algorithm \ref{alg:update}), the new linearized formula
is tested for satisfiability again, and the process is repeated
(typically, while a predetermined time limit is not exceeded). We
refer the reader to \cite{Borrallerasetal2011JAR} for further details.

\begin{example}
\label{ex:running}
Let $F_0$ be the formula
$$tx + y \,\geq\, 4 \,\;\land\;\, t^2 + x^2 + w^2 + y^2 \,\leq\, 12,$$
where variables $t, x, w, y$ are integer. Let us also assume that we
introduce the following artificial bounds so as to linearize:
$\bounds \equiv -1 \leq t, x, w, y \leq 1$. Now a linearization $F$ of
$F_0$ could be for example:

\begin{center}

$
\begin{array}{ll}
v_{tx} + y \geq 4 \;\land\; v_{t^2} + v_{x^2} + v_{w^2} + v_{y^2} \leq 12 \;\;\land\medskip\\
\begin{array}{lcl c lcl c}
( t =-1 & \limplies & v_{tx} = -x   ) & \;\;\;\land\;\;\; &  \\
( t = 0 & \limplies & v_{tx} =  0   ) & \;\;\;\land\;\;\; &  \\
( t = 1 & \limplies & v_{tx} =  x   ) & \;\;\;\land\;\;\; &  \\
\\
( t =-1 & \limplies & v_{t^2} =  1  ) & \;\;\;\land\;\;\; &  ( w =-1 & \limplies & v_{w^2} =  1  ) & \;\;\land\\
( t = 0 & \limplies & v_{t^2} =  0  ) & \;\;\;\land\;\;\; &  ( w = 0 & \limplies & v_{w^2} =  0  ) & \;\;\land\\
( t = 1 & \limplies & v_{t^2} =  1  ) & \;\;\;\land\;\;\; &  ( w = 1 & \limplies & v_{w^2} =  1  ) & \;\;\land\\
\\
( x =-1 & \limplies & v_{x^2} =  1  ) & \;\;\;\land\;\;\; &  ( y =-1 & \limplies & v_{y^2} =  1  ) & \;\;\land\\
( x = 0 & \limplies & v_{x^2} =  0  ) & \;\;\;\land\;\;\; &  ( y = 0 & \limplies & v_{y^2} =  0  ) & \;\;\land\\
( x = 1 & \limplies & v_{x^2} =  1  ) & \;\;\;\land\;\;\; &  ( y = 1 & \limplies & v_{y^2} =  1  ) & \;\;\land\\
\end{array}
\end{array}
$
\end{center}
where $v_{tx}, v_{t^2}, v_{x^2}, v_{w^2}, v_{y^2}$ are fresh
integer variables standing for the non-linear monomials in the
respective subscripts.

In this case the formula $F \land \bounds$ turns out to be
unsatisfiable. For instance, the \SMT(QF-LIA) solver could produce the
following unsatisfiable core:

\bigskip

\begin{math}
\{  v_{tx} + y \geq 4,
\end{math}

\smallskip

\begin{math}
\begin{array}{lclcrclcrcl}
t =-1 & \limplies & v_{tx} = -x, & \quad &  y & \leq & 1,                        \smallskip \\
t = 0 & \limplies & v_{tx} =  0, & \quad & -1 & \leq & t, & &  -1  & \leq & x,   \smallskip \\
t = 1 & \limplies & v_{tx} =  x, & \quad &  t & \leq & 1, & &   x  & \leq & 1 \} \smallskip \\
\end{array}
\end{math}

\smallskip


\bigskip

Intuitively, if $|t|, |x|, y \leq 1$, then it cannot be the
case that $tx + y \geq 4$. At this stage, one has to relax at least
one of the artificial bounds in the core, for example $x \leq 1$.
Notice that, on the other hand, the core does not provide any help in regard to
deciding the new upper bound for $x$. If, e.g., we chose that it were
$x \leq 4$, then $x \leq 4$ would replace $x \leq 1$ in the set of
artificial bounds $\bounds$, and the following clauses would be added
to the linearization $F$:

\begin{center}
$
\begin{array}{lcl}
x = 2 & \limplies & v_{x^2} = 4\\
x = 3 & \limplies & v_{x^2} = 9\\
x = 4 & \limplies & v_{x^2} = 16\\
\end{array}
$
\end{center}

In the next iteration one could already find solutions to the
non-linear formula $F_0$, for instance,
$t = v_{t^2} = w = v_{w^2} = y = v_{y^2} = 1$, $x = v_{tx} = 3$, and
$v_{x^2} = 9$.\hfill$\blacksquare$
\end{example}

\section{Solving \SMT(QF-NIA) with Minimal Models}
\label{sec:domain_relaxation}

Taking into account the limitations of the method based on cores when
domains have to be extended, in this section we present a model-guided
approach to perform this step. Namely, we propose to replace the
satisfiability check in linear arithmetic with an optimization call,
so that the best model found by the linear solver can be used as a
reference for relaxing the bounds.

This is the key idea of the procedure
\solveSMTQFNIAMinModels for solving \SMT(QF-NIA)
shown in Algorithm \ref{alg:solve_SMT_QF_NIA_min_models} (cf.
Algorithm \ref{alg:solve_SMT_QF_NIA_cores}; note that all
subprocedures except for \optimizeQFLIA and
\relaxDomainsMinModels are the same). Now the
\SMT(QF-LIA) black box (procedure \optimizeQFLIA) does
not just decide satisfiability, but finds the minimal model of its
input formula $F$ according to a certain cost function. If this model
does not satisfy the original non-linear formula, then it can be
employed as a hint in the domain relaxation (procedure
\relaxDomainsMinModels, with pseudo-code in Algorithm
\ref{alg:relax_domains_min_models}) as follows. Since the non-linear
formula is not satisfied, it must be the case that some of the
artificial bounds are not respected by the minimal model. By gathering
these bounds, a set of candidates to be relaxed is obtained, as in the
approach of Section \ref{sec:unsatisfiable_cores}. However, and most
importantly, unlike with unsatisfiable cores now for each of these
bounds a new value can be guessed too: one just needs to take the
corresponding variable and enlarge its domain up to the value assigned
in the minimal model. That is, let us assume that $V \leq U$ is an
artificial upper bound on variable $V$ that is falsified in the
minimal model. If $V$ is assigned value $U'$ in that model (and,
hence, $U < U'$), then $V \leq U'$ becomes the new upper bound of $V$.
A similar construction applies for lower bounds.

The intuition behind this approach is that the cost function should
measure how far assignments are from being a solution to the original
non-linear formula. Formally, the function must be non-negative and
have the property that the models of the linearized formula with cost
0 are true models of the original non-linear formula:

\begin{theorem}
  \label{th:correctness}
  Let $F_0$ be an arbitrary formula in QF-NIA, and $F$ be any
  linearization of $F_0$ in QF-LIA obtained using the procedure
  \linearize with artificial bounds $B$.

  A function $\cost$ that takes as input the models of $F$ is
  \emph{admissible} if:

  \begin{enumerate}
  \item $\cost(M) \geq 0$ for any $M$ model of $F$;
  \item If $\cost(M) = 0$ then $M$ is a model of $F_0$.
  \end{enumerate}

  If the cost functions in procedure \solveSMTQFNIAMinModels
  are admissible then the procedure is correct. That is, given a
  formula $F_0$ in QF-NIA:

  \begin{enumerate}

  \item if \solveSMTQFNIAMinModels($F_0$) returns $\Sat$ then
    formula $F_0$ is satisfiable; and

  \item if \solveSMTQFNIAMinModels($F_0$) returns $\Unsat$ then
    formula $F_0$ is unsatisfiable.

  \end{enumerate}

\end{theorem}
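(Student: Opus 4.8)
The plan is to isolate two facts about the linearization and then read the theorem's two claims directly off the control flow of \solveSMTQFNIAMinModels. A preliminary remark: condition~1 of admissibility is what makes $\cost$ a sensible measure of ``distance to a solution'' and justifies the relaxation heuristic, but the soundness statements actually use only condition~2 (and not even the optimality of the model returned by \optimizeQFLIA); likewise nothing is claimed about the $\Unknown$/timeout exit.

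\textbf{Step 1: the loop invariant.} I would first show that at the head of every iteration the current formula $F$ equals $\linearize(F_0,B)$ for the current set of artificial bounds $B$. This holds initially by construction. For the inductive step one checks that \update preserves it: when the bounds move from $B$ to $B'$, \update adds exactly the case-splitting clauses $V = K \limplies V_Q = \mathrm{evaluate}(Q,V,K)$ for the values $K$ in the part of the range newly covered by $B'$, and these are precisely the clauses that $\linearize(F_0,B')$ has over and above $\linearize(F_0,B)$; hence afterwards $F=\linearize(F_0,B')$. Matching the loops of \update and \linearize in detail (including the degenerate case where relaxing a bound on $V$ adds nothing because $V$ linearizes no monomial) is the one genuinely fiddly part of the argument.

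\textbf{Step 2: every model of $F_0$ extends to a model of $F$.} Given a model $m$ of $F_0$, define $M$ by keeping $m$ on the original variables and setting $M(V_Q)$ to the value of the monomial $Q$ under $M$ (well defined by induction on the nesting depth of monomials, i.e.\ in the order \linearize introduces the fresh variables). Then $M$ satisfies the clauses of $F$ obtained from $F_0$ via the substitutions $Q\mapsto V_Q$, since $m\models F_0$ and $M(V_Q)$ is the value of $Q$ under $M$; and $M$ satisfies every case-splitting clause, because if $M(V)\neq K$ the implication is vacuous, while if $M(V)=K$ then $\mathrm{evaluate}(Q,V,K)$ evaluated under $M$ equals the value of $Q$ under $M$, that is $M(V_Q)$. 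The crucial observation is that this argument never uses the artificial bounds: a case-splitting clause is simply inactive when its linearization variable falls outside the (artificial) domain, so forgetting $B$ can only make $F$ easier to satisfy. (The converse fails — $x^2<0$ is unsatisfiable yet its linearization with $-1\le x\le 1$ is satisfied by $x=5$, $v_{x^2}=-1$ — which is precisely why reporting $\Sat$ must rest on the cost-$0$ test, not on mere satisfiability of $F$.)

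\textbf{Step 3: conclusion.} \solveSMTQFNIAMinModels returns $\Sat$ only on the branch guarded by $\cost(M)=0$, where $M$ is a model of the current $F$ returned by \optimizeQFLIA; since $F=\linearize(F_0,B)$ by Step~1, condition~2 of admissibility yields that $M$ is a model of $F_0$, so $F_0$ is satisfiable. It returns $\Unsat$ only when \optimizeQFLIA reports that the current $F$ has no model at all (recall \optimizeQFLIA minimizes $\cost$ over the models of $F$, with $B$ merely parametrizing $\cost$), and then the contrapositive of Step~2 gives that $F_0$ is unsatisfiable. The main obstacle throughout is the bookkeeping of Step~1; Steps~2 and~3 are short structural/inductive arguments once that is in place.
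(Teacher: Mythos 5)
Your proof is correct and takes essentially the same route as the paper's: the $\Sat$ case is read directly off admissibility condition~2 at the cost-$0$ exit, and the $\Unsat$ case rests on the observation that any model of $F_0$ extends to a model of the linearization (only case-splitting clauses are added), so unsatisfiability of $F$ transfers back to $F_0$. The only difference is that you make explicit the loop invariant $F=\linearize(F_0,\bounds)$ and the concrete extension of a model of $F_0$ to the fresh monomial variables, both of which the paper leaves implicit.
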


\begin{proof}
  \begin{longenum}

  \item Let us assume that \solveSMTQFNIAMinModels($F_0$) returns
    $\Sat$. Then there is a set of artificial bounds $B$ such that
    $F$, the linearization of $F_0$ using $B$, satisfies the
    following: \optimizeQFLIA($F$, $B$) returns a model $\model$ of
    $F$ such that $\cost(\model) = 0$. As $\cost$ is admissible we
    have that $\model$ is a model of $F_0$.

  \item Let us assume that
    \solveSMTQFNIAMinModels($F_0$) returns
    $\Unsat$. Then there is a set of artificial bounds $B$ such that
    $F$, the linearization of $F_0$ using $B$, satisfies that
    \optimizeQFLIA($F$, $B$) returns $\Unsat$. By the
    specification of \optimizeQFLIA, this means that $F$
    is unsatisfiable. But since only case splitting clauses are added
    in the linearization, any model of $F_0$ can be extended to a
    model of $F$. By reversing the implication we conclude that $F_0$
    must be unsatisfiable.
  \end{longenum}
\end{proof}

\begin{algorithm}[t]
\begin{lstlisting}
status $\solveSMTQFNIAMinModels$(formula $F_0$){ // returns whether $F_0$ is satisfiable
  $\bounds$ = $\artificialBounds$($F_0$);                                            // $\bounds$ are artificial bounds enough to linearize $F_0$
  $F$ = $\linearize$($F_0$, $\bounds$);
  while (not timed_out()) {

    // If $\status == \Unsat$ then $F$ is $\Unsat$
    // If $\status == \Sat$ then $\model$ is a model of $F$ minimizing function $\cost$ below among all models of $F$
    $\langle \status, \model \rangle$ = $\optimizeQFLIA$($F$, $\bounds$);

    if ($\status$ == $\Unsat$)       return $\Unsat$;
    else if ($\cost(\model)$ == $0$) return $\Sat$;
    else {
      $\bounds'$ = $\relaxDomainsMinModels$($\bounds$, $\model$);
      $F$ = $\update$($F$, $\bounds$, $\bounds'$);                                        // add case splitting clauses
      $\bounds$ = $\bounds'$;
  } }
  return $\Unknown$;
}
\end{lstlisting}
  \caption{Algorithm for solving \SMT(QF-NIA) with minimal models}
  \label{alg:solve_SMT_QF_NIA_min_models}
\end{algorithm}

\begin{algorithm}[t]
\begin{lstlisting}
set $\relaxDomainsMinModels$(set $\bounds$, map $\model$) {   // returns the new set of artificial bounds
  let $S \subseteq \{b \;|\; b \in \bounds, \model \not\models b\}$ such that $S \neq \emptyset$;    // choose among bounds violated by the model
  $\bounds' = \bounds$;
  for ($V\geq L$ in $S$) $\hspace{0.1mm}$ $\bounds'$ $\!$= $\bounds'$ - {$V \geq L$} $\hspace{0mm}$ $\!\cup$ {$V \geq \model(V)$}; // $L > \model(V)$ as $\model \not\models$ $V\geq L$
  for ($V\leq U$ in $S$) $\bounds'$ = $\bounds'$ - {$V \leq U$} $\cup$ {$V \leq \model(V)$}; $\!$ // $U < \model(V)$ as $\model \not\models$ $V\leq U$
  return $\bounds'$;
}
\end{lstlisting}
\caption{Procedure \relaxDomainsMinModels}
\label{alg:relax_domains_min_models}
\end{algorithm}

Under the assumption that cost functions are admissible, note that, if
at some iteration in procedure
\solveSMTQFNIAMinModels there are models of the
linearization with null cost (hence satisfying the original non-linear
formula), then the search is over: \optimizeQFLIA will
return such a model, as it minimizes a non-negative cost function.

In what follows we propose two different admissible (classes of) cost
functions: the \emph{number} of violated artificial bounds (Section
\ref{sec:maxsmt_la}), and the \emph{distance} with respect to the
artificial domains (Section \ref{sec:omt_la}). In both cases, to
complete the implementation of \solveSMTQFNIAMinModels the only
procedure that needs to be defined is \optimizeQFLIA, since procedure
\relaxDomainsMinModels is independent of the cost function (see
Algorithm \ref{alg:relax_domains_min_models}).

\subsection{A \MAXSMT(QF-LIA) Approach to Domain Relaxation}
\label{sec:maxsmt_la}

As sketched out above, a possibility is to define the cost of an
assignment as the number of violated artificial bounds. A natural way
of implementing this is to transform the original non-linear formula
into a linearized weighted formula and use a \MAXSMT(QF-LIA) tool. In
this setting, the clauses of the linearization are hard, while the
artificial bounds are considered to be soft (e.g., with weight $1$ if
we literally count the number of violated bounds). Procedure
\optimizeQFLIAMaxSMT is described formally in Algorithm
\ref{alg:optimize_QF_LIA:maxsmt}. It is worth highlighting that not
only is the underlying \MAXSMT(QF-LIA) solver required to report the
optimum value of the cost function, but it must also produce an
assignment in the theory for which this optimum value is attained (so
that it can be used in the domain relaxation). A direct and effective
way of accomplishing this task is by performing branch-and-bound on
top of an \SMT(QF-LIA) solver, as done in
\cite{NieuwenhuisOliveras2006SAT}\footnote{Other approaches could also
  employed for solving \MAXSMT(QF-LIA); for example, one could
  iteratively obtain unsatisfiable cores and add relaxation variables
  and cardinality or pseudo-Boolean constraints to the instance until
  a $\Sat$ answer is obtained
  \cite{FuMalik2006SAT,DBLP:journals/ai/AnsoteguiBL13,DBLP:journals/constraints/MorgadoHLPM13}.
  Nevertheless, here we opted for branch-and-bound for its simplicity
  and because it can be easily adapted to meet the requirements for
  solving \MAXSMT(QF-NIA); see Section \ref{sec:maxsmt_nia}.}.

\begin{algorithm}[t]
\begin{lstlisting}
$\langle$ status, map $\rangle$ $\optimizeQFLIAMaxSMT$(formula $F$, set $\bounds$) {
  $F' = F$;
  for ($V \geq L$ in $\bounds$)
    $F'$ = $F'$ $\cup$ $\{ [V \geq L, 1]\}$;                        // added as a soft clause, e.g. with weight 1
  for ($V \leq U$ in $\bounds$)
    $F'$ = $F'$ $\cup$ $\{ [V \leq U, 1]\}$;                        // added as a soft clause, e.g. with weight 1

  return $\solveMaxSMTQFLIA$($F'$);     // call to \maxsmt solver
}
\end{lstlisting}
\caption{Procedure \optimizeQFLIAMaxSMT based on \MAXSMT(QF-LIA)}
\label{alg:optimize_QF_LIA:maxsmt}
\end{algorithm}

The next lemma justifies, together with Theorem \ref{th:correctness},
that procedure \solveSMTQFNIAMinModels, when
instantiated with \optimizeQFLIAMaxSMT, is correct:

\begin{lemma}
\label{lemma:optimize_QF_LIA:maxsmt}
  Let $F_0$ be an arbitrary formula in QF-NIA, and $F$ be any
  linearization of $F_0$ in QF-LIA obtained using the procedure
  \linearize with artificial bounds $B$.

  The function $\cost$ that takes as an input a model $\model$ of $F$
  and returns the number of bounds from $\bounds$ that are not
  satisfied by $\model$ is admissible.
\end{lemma}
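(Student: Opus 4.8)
The plan is to verify the two admissibility conditions of Theorem~\ref{th:correctness} for this particular $\cost$. The first is immediate: $\cost(\model)$ is by definition the cardinality of the set of bounds in $\bounds$ falsified by $\model$, hence a non-negative integer. The content of the lemma is the second condition, namely that $\cost(\model) = 0$ forces $\model$ to be a model of $F_0$.

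So suppose $\model$ is a model of $F$ with $\cost(\model) = 0$; by definition of $\cost$ this means that $\model$ satisfies every bound in $\bounds$ (recall that $\bounds$ is kept separate from $F$, so this does not follow automatically from $\model \models F$). The key step is the claim that \emph{for every non-linear monomial $Q$ to which \linearize assigns a fresh variable $V_Q$, the value $\model(V_Q)$ equals the value of $Q$ under $\model$}. I would prove this by strong induction on the total degree of $Q$. When \linearize linearizes $Q$ using a variable $V$, let $\ell_V \le u_V$ be the lower and upper bounds it computes for $V$, and put $K = \model(V)$. These bounds come from bound literals occurring explicitly in $F_0$ (hence verbatim in $F$, since they are linear) or in $\bounds$, all of which $\model$ satisfies; therefore $\ell_V \le K \le u_V$, so $F$ contains the clause $V = K \limplies V_Q = \textit{evaluate}(Q, V, K)$. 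Since $\model \models V = K$ trivially, $\model(V_Q) = \model(\textit{evaluate}(Q, V, K))$. Now $\textit{evaluate}(Q, V, K)$ is the polynomial obtained from $Q$ by substituting the constant $K$ for $V$; each of its non-linear monomials has total degree strictly smaller than that of $Q$ (at least one factor has been removed) and is exactly one of the monomials spawned at this step, so by the induction hypothesis $\model$ gives its fresh variable the correct value. Hence $\model(\textit{evaluate}(Q, V, K))$ equals the value of $\textit{evaluate}(Q,V,K)$ under $\model$, which, using $\model(V) = K$, is the value of $Q$ under $\model$. This proves the claim.

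To conclude, note that the non-case-splitting clauses of $F$ are precisely the clauses of $F_0$ with every non-linear monomial $Q$ appearing in $F_0$ replaced by its fresh variable $V_Q$. Since $\model \models F$, these clauses hold under $\model$, and by the claim $\model(V_Q)$ coincides with the value of $Q$ under $\model$; substituting back, $\model$ satisfies every clause of $F_0$, so $\model$ is a model of $F_0$. This is the second admissibility condition, and correctness of \solveSMTQFNIAMinModels when instantiated with \optimizeQFLIAMaxSMT then follows from Theorem~\ref{th:correctness}, together with the observation that \optimizeQFLIAMaxSMT does return a model of $F$ minimizing the number of violated bounds, since in $F'$ the clauses of $F$ are hard while the bounds in $\bounds$ are soft with weight $1$.

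I expect the main obstacle to be the bookkeeping inside the claim: making precise that \linearize's on-the-fly generation of fresh monomials still gives a well-founded induction, and --- the more delicate point --- justifying that $\model(V)$ lies in the case-split range of $V$ from the bound literals explicitly present in $F$ and $\bounds$, rather than from $\model \models F_0$, which is exactly what is being established.
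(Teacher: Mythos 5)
Your proof is correct and follows essentially the same route as the paper: non-negativity is immediate, and the heart of the matter is that a zero-cost model respects all bounds in $\bounds$, which makes the relevant case-splitting clauses fire so that every fresh variable $V_Q$ takes the true value of its monomial $Q$, whence $\model \models F_0$. The paper states this last step in one sentence (``all additional variables standing for non-linear monomials have values consistent with the theory''); your induction on the total degree of the monomials, together with the observation that the case-split ranges are justified by bound literals present verbatim in $F$ or in $\bounds$, simply supplies the detail the paper leaves implicit.
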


\begin{proof}
  It is obvious that the function is non-negative. It remains to be
  proved that, if $\cost(\model) = 0$, then $\model \models F_0$.
  Indeed, if $\cost(\model) = 0$ then all bounds in $\bounds$ are
  satisfied. But since $F$ is a linearization of $F_0$ with artificial
  bounds $B$, we have that all additional variables standing for
  non-linear monomials have values in $\model$ that are consistent
  with the theory. Hence, we conclude that $\model \models F_0$.
\end{proof}

Regarding the weights of the soft clauses, as can be observed from the
proof of Lemma \ref{lemma:optimize_QF_LIA:maxsmt}, it is not necessary
to have unit weights. One may use different values, provided they are
positive, and then the cost function corresponds to a weighted sum.
Moreover, note that weights can be different from one iteration of the
loop of \solveSMTQFNIAMinModels to the next one.

\begin{example}
  \label{ex:maxsmt}
  Let us consider the same formula as in Example \ref{ex:running}:
  $$tx + y \,\geq\, 4 \;\land\; t^2 + x^2 + w^2 + y^2 \,\leq\, 12.$$
  Recall that, in this case, the artificial bounds are
  $-1 \leq t, x, w, y \leq 1$. We obtain the weighted formula
  consisting of the clauses of $F$ (as defined in Example
  \ref{ex:running}) as hard clauses, and
\begin{center}
$
\begin{array}{rcrcrcrcrcrcrc} \\
\left[-1 \leq t, 1\right]  & \land &  \left[-1 \leq x, 1\right]  & \land &  \left[-1 \leq w, 1\right]  & \land &  \left[-1 \leq y, 1\right] & \;\land \smallskip\\
\left[ t \leq 1, 1\right]  & \land &  \left[ x \leq 1, 1\right]  & \land &  \left[ w \leq 1, 1\right]  & \land &  \left[ y \leq 1, 1\right]\,,
\end{array}
$
\end{center}
as soft clauses (written following the format
$[\mathrm{clause}, \mathrm{weight}]$).

In this case minimal solutions have cost $1$: at least one of the
artificial bounds has to be violated so as to satisfy
$v_{tx} + y \geq 4$. For instance, the \MAXSMT(QF-LIA) solver could
return the assignment: $t = v_{t^2} = 1$, $x = v_{tx} = 4$ and
$w = v_{w^2} = y = v_{y^2} = v_{x^2} = 0$, where the only soft
clause that is violated is $[ x \leq 1, 1]$. Note that, as $x = 4$ is
not covered by the case splitting clauses for $v_{x^2}$, the values of
$v_{x^2}$ and $x$ are unrelated. Now the new upper bound for $x$ would
become $x \leq 4$ (so the soft clause $[ x \leq 1, 1]$ would be
replaced by $[ x \leq 4, 1]$), and similarly to Example
\ref{ex:running}, the following hard clauses would be added:

\begin{center}
$
\begin{array}{lcl}
x = 2 & \limplies & v_{x^2} = 4\\
x = 3 & \limplies & v_{x^2} = 9\\
x = 4 & \limplies & v_{x^2} = 16\\
\end{array}
$
\end{center}

As seen in Example \ref{ex:running}, in the next iteration there are
solutions with cost 0, e.g.,
$t = v_{t^2} = w = v_{w^2} = y = v_{y^2} = 1$, $x = v_{tx} = 3$ and
$v_{x^2} = 9$.\hfill$\blacksquare$
\end{example}

One of the disadvantages of this approach is that potentially the
\MAXSMT(QF-LIA) solver could return models with arbitrarily large
numerical values: note that what the cost function takes into account
is just whether a bound is violated or not, but not by how much. For
instance, in Example \ref{ex:running}, it could have been the case
that the \MAXSMT(QF-LIA) solver returned $w = y = 0$, $t = 1$,
$x = 10^5$, $v_{x^2} = 0$, etc. Since the model is used for extending
the domains, a large number would involve adding a prohibitive number
of case splitting clauses, and at the next iteration the
\MAXSMT(QF-LIA) solver would not able to handle the formula with a
reasonable amount of resources. However, having said that, as far as
we have been able to experiment, this kind of behaviour is rarely
observed in our implementation; see Section
\ref{sec:smt_nia_experiments_all} for more details. On the other hand, the
cost function in Section \ref{sec:omt_la} below does not suffer from
this drawback.

\subsection{An \OMT(QF-LIA) Approach to Domain Relaxation}
\label{sec:omt_la}

Another possibility of cost function for models of the linearization
is to measure the distance with respect to the artificial domains.
This can be cast as a problem in \OMT(QF-LIA) as follows.

Given a non-linear formula $F_0$, let us consider a linearization $F$
obtained after applying procedure \linearize with artificial
bounds $\bounds$. Now, let $\mathrm{vars}(\bounds)$ be the set of
variables $V$ for which an artificial domain
$[\l_V, \u_V] \in \bounds $ is added for the linearization. Formally,
the cost function is
$\sum_{V \in {\mathrm{vars}(\bounds)}} \delta(V, [\l_V, \u_V]),$
where $\delta(z, [\l, \u])$ is the \emph{distance} of $z$ with respect
to $[\l, \u]$:
\begin{center}
  $
  \delta(z, [\l, \u]) = \left\lbrace
    \begin{array}{ll}
      \l - z & \hbox{ if $z < \l$}\\
      0     & \hbox{ if $\l \leq z \leq \u$}\\
      z - \u & \hbox{ if $z > \u$}\\
    \end{array}
  \right.
  $
\end{center}
Note that, in the definition of the cost function, one can safely also
include bounds which are not artificial but derived from the
non-linear formula: the contribution to the cost of these is null,
since they are part of the original formula and therefore must always
be respected.

The approach is implemented in the procedure
\optimizeQFLIAOMT shown in Algorithm
\ref{alg:optimize_QF_LIA:omt}. In this procedure, an \OMT(QF-LIA)
solver is called (procedure \solveOMTQFLIA). Such a
system can be built upon an existing \SMT(QF-LIA) solver by adding an
optimization simplex phase II \cite{Schrijver} when the SAT engine
reaches a leaf of the search space. For the \OMT(QF-LIA) solver to
handle the cost function, the problem requires the following
reformulation. Let $\cost$ be the variable that the solver minimizes.
For each variable $V \in \mathrm{vars}(\bounds)$ with domain $[\l_V, \u_V]$, let us
introduce once and for all two extra integer variables $l_V$ and $u_V$
(meaning the distance with respect to the lower and to the upper bound
of the domain of $V$, respectively) and the \emph{auxiliary
  constraints} $l_V \geq 0 $, $l_V \geq \l_V - V $, $u_V \geq 0$,
$u_V \geq V - \u_V$. Then the cost function is determined by the
equation $\cost = \sum_{V \in \mathrm{vars}(\bounds)} (l_V + u_V)$, which is added to the
formula together with the aforementioned auxiliary constraints.

\begin{algorithm}[t]
\begin{lstlisting}
$\langle$ status, map $\rangle$ $\optimizeQFLIAOMT$(formula $F$, set $\bounds$) {

  $F' = F$;
  $E = 0$;                                               // expression for the cost function

  for ($V \geq L$ in $\bounds$) {
    $l_V$ = fresh_variable();
    $F'$ = $F'$ $\cup$ $\{ l_V \geq 0, l_V \geq L - V \}$;
    $E = E + l_V$;
  }

  for ($V \leq U$ in $\bounds$) {
    $u_V$ = fresh_variable();
    $F'$ = $F'$ $\cup$ $\{ u_V \geq 0, u_V \geq V - U \}$;
    $E = E + u_V$;
  }

  $F'$ = $F'$ $\cup$ {$\cost = E$};                                         // $\cost$ is the variable to be minimized

  return $\solveOMTQFLIA$($\langle \cost, F'\rangle)$);             // call to \omt solver
}
\end{lstlisting}
\caption{Procedure \optimizeQFLIAOMT based on \OMT(QF-LIA)}
\label{alg:optimize_QF_LIA:omt}
\end{algorithm}

The following result claims that the proposed cost function is admissible. Hence, by virtue of Theorem \ref{th:correctness},
if procedure \optimizeQFLIAOMT is implemented as in
Algorithm \ref{alg:optimize_QF_LIA:omt}, then
procedure \solveSMTQFNIAMinModels
is sound:

\begin{lemma}
\label{lemma:optimize_QF_LIA:omt}
  Let $F_0$ be an arbitrary formula in QF-NIA, and $F$ be any
  linearization of $F_0$ in QF-LIA obtained using the procedure
  \linearize with artificial bounds $B$.

  The function $\cost$ that takes as an input a model of $F$ and
  returns its distance to the artificial domains:

  $$ \sum_{V \in \mathrm{vars}(\bounds)} \delta(V, [\l_V, \u_V]) $$

  is admissible.
\end{lemma}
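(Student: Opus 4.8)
The plan is to verify the two conditions in the definition of admissibility (Theorem~\ref{th:correctness}) for the distance-based cost function, mirroring the structure of the proof of Lemma~\ref{lemma:optimize_QF_LIA:maxsmt}. First I would observe that non-negativity is immediate: each summand $\delta(V, [\l_V, \u_V])$ is defined by a case split into $\l_V - V$, $0$, and $V - \u_V$, and in each branch the guard ($V < \l_V$, $\l_V \le V \le \u_V$, $V > \u_V$ respectively) forces the value to be $\ge 0$; a finite sum of non-negative integers is non-negative.

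The substantive part is the second condition: if $\cost(\model) = 0$ then $\model \models F_0$. Since the sum of non-negative terms vanishes, every term must vanish, i.e. $\delta(\model(V), [\l_V, \u_V]) = 0$ for each $V \in \mathrm{vars}(\bounds)$. By the definition of $\delta$, this means $\l_V \le \model(V) \le \u_V$ for every such $V$, i.e. $\model$ satisfies all the artificial bounds in $\bounds$. At this point I would invoke exactly the same argument used in Lemma~\ref{lemma:optimize_QF_LIA:maxsmt}: because $F$ is a linearization of $F_0$ with artificial bounds $\bounds$, for each non-linear monomial $Q$ replaced by a fresh variable $V_Q$ there is a case-splitting clause $V = K \limplies V_Q = \mathrm{evaluate}(Q, V, K)$ for every integer $K$ in the domain $[\l_V, \u_V]$ of its linearization variable $V$ (see procedure \linearize). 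Since $\model$ respects the bound on $V$, the value $\model(V)$ lies in that range, so the corresponding case-splitting clause fires and pins $\model(V_Q)$ to the correct theory value of $Q$ under $\model$. Propagating this upward through the (finitely many, acyclic) layers of monomial abstractions, every abstraction variable gets its theory-consistent value, so $\model$ restricted to the original variables satisfies the clauses of $F_0$; hence $\model \models F_0$.

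The main obstacle — really the only point requiring care — is making precise why "all artificial bounds satisfied" suffices to guarantee that the abstraction variables carry theory-correct values, given that \linearize may introduce \emph{nested} monomials (the $yz$ inside $x^2yz$ in the running example) and that only \emph{one} linearization variable per monomial is bounded. I would handle this by an induction on the order in which monomials are processed by \linearize: the base case is a monomial all of whose variables already appear in $F_0$, and the inductive step uses that any freshly introduced monomial appearing in a case-splitting clause $C$ is itself linearized later, so its abstraction variable is likewise pinned down. Since this is the identical reasoning already accepted in Lemma~\ref{lemma:optimize_QF_LIA:maxsmt} and in part~(2) of the proof of Theorem~\ref{th:correctness}, I would keep the write-up short, citing that lemma's proof rather than repeating the monomial induction in full. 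One minor remark worth including: the observation already made in the text — that including genuine (non-artificial) bounds in the sum is harmless because their $\delta$-contribution is forced to $0$ by $F_0$ itself — means the lemma statement is robust to that implementation choice and does not affect admissibility.
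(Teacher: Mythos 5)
Your proposal is correct and follows essentially the same route as the paper, which simply notes that the argument is analogous to that of Lemma~\ref{lemma:optimize_QF_LIA:maxsmt}: cost zero forces every distance term to vanish, so all artificial bounds hold, and then the case-splitting clauses of the linearization pin the abstraction variables to theory-consistent values, giving a model of $F_0$. Your additional remarks (the explicit induction over nested monomials and the observation about genuine bounds contributing nothing) are fine elaborations of details the paper leaves implicit, not a different proof.
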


\begin{proof}
  The proof is analogous to that of Lemma
  \ref{lemma:optimize_QF_LIA:maxsmt}.
\end{proof}

Intuitively the proposed cost function corresponds to the \emph{number
  of new cases} that will have to be added in the next iteration of
the loop in \solveSMTQFNIAMinModels. However, it is also possible to
consider slightly different cost functions: for instance, one could
count the \emph{number of new clauses} that will have to be added. For
this purpose, it is only necessary to multiply variables $l_V$, $u_V$
in the equation that defines $\cost$ by the number of monomials that
were linearized by case splitting on $V$. In general, similarly to
Section \ref{sec:maxsmt_la}, one may have a template of cost function
of the form
$\cost = \sum_{V \in \mathrm{vars}(\bounds)} (\alpha_V\, l_V +
\beta_V\, u_V)$,
where $\alpha_V, \beta_V > 0$ for all $V \in \mathrm{vars}(\bounds)$.
Further, again these coefficients may be changed from one iteration to
the next one.

\begin{example}
  \label{ex:omt}
  Yet again let us take the same non-linear formula from Example
  \ref{ex:running}:
  $$tx + y \,\geq\, 4 \;\land\; t^2 + x^2 + w^2 + y^2 \,\leq\, 12.$$
  Let us also recall the artificial bounds:
  $-1 \leq t, x, w, y \leq 1$. By using the linearization $F$ as
  defined in Example \ref{ex:running}, one can express the resulting
  \OMT(QF-LIA) problem as follows:

  \begin{center}
    $\min \;
    \delta(t, [-1, 1]) +
    \delta(x, [-1, 1]) +
    \delta(w, [-1, 1]) +
    \delta(y, [-1, 1])\;\;$ subject to $F$,

  \end{center}
  or equivalently,
  \begin{center}
    $\min \;\cost\;\;$ subject to

    \medskip

    $F$ $\;\;\land$

    \smallskip
    $
    \begin{array}{rcl c rcl c rcl c rcl c rcl c rcl c rcl c cc}
      l_t  & \geq &  0 & \;\;\land\;\; & l_t  & \geq &  -1-t & \;\;\land\;\; & u_t  & \geq &  0 & \;\;\land\;\; & u_t  & \geq &  t-1 & \;\;\land \smallskip \\
      l_x  & \geq &  0 & \;\;\land\;\; & l_x  & \geq &  -1-x & \;\;\land\;\; & u_x  & \geq &  0 & \;\;\land\;\; & u_x  & \geq &  x-1 & \;\;\land \smallskip \\
      l_w  & \geq &  0 & \;\;\land\;\; & l_w  & \geq &  -1-w & \;\;\land\;\; & u_w  & \geq &  0 & \;\;\land\;\; & u_w  & \geq &  w-1 & \;\;\land \smallskip \\
      l_y  & \geq &  0 & \;\;\land\;\; & l_y  & \geq &  -1-y & \;\;\land\;\; & u_y  & \geq &  0 & \;\;\land\;\; & u_y  & \geq &  y-1 & \;\; \land \medskip \\
    \end{array}
    $

    $\cost = l_t + u_t + l_x + u_x + l_w  + u_w + l_y + u_y \;\;$\\

    \medskip

  \end{center}

  In this case, it can be seen that minimal solutions have cost $1$.
  For example, the \OMT(QF-LIA) solver could return the assignment:
  $x = v_{x^2} = 1$, $t = 2$, $v_{tx} = 4$ and
  $w = v_{w^2} = y = v_{y^2} = v_{t^2} = 0$. Note that, as $t = 2$ is
  not covered by the case splitting clauses, the values of $v_{tx}$
  and $v_{t^2}$ are unrelated to $t$. Now the new upper bound for $t$
  is $t \leq 2$, and clauses
\begin{center}
$
\begin{array}{lcl}
t = 2 & \limplies & v_{tx} = 2x\\
t = 2 & \limplies & v_{t^2} = 4\\
\end{array}
$
\end{center}
are added to the linearization.

At the next iteration there is still no solution with cost 0, so at
least another further iteration is necessary before a true model of
the non-linear formula can be found. \hfill$\blacksquare$
\end{example}

One of the drawbacks of this approach is that, as the previous example
suggests, domains may be enlarged very slowly. This implies that, in
cases where solutions have large numbers, many iterations are needed
before one of them is discovered. See Section
\ref{sec:smt_nia_experiments_all} below for more details on the performance
of this method in practice.

\subsection{Experimental Evaluation of  Model-guided Approaches}
\label{sec:smt_nia_experiments_all}

Here we evaluate experimentally our approaches for \SMT(QF-NIA) and
compare them with other non-linear solvers, namely those participating
in the QF-NIA division of the 2016 edition of SMT-COMP
(\url{http://smtcomp.sourceforge.net}). More in detail, we consider
the following tools:

\begin{itemize}

\item \aproveNia \cite{DBLP:conf/sat/FuhsGMSTZ07} with its default configuration;

\item \cvc4 \cite{DBLP:conf/cav/BarrettCDHJKRT11} version of 05-27-2016;

\item \prob \cite{DBLP:conf/sefm/KringsBL15};

\item \smtrat \cite{DBLP:conf/sat/CorziliusKJSA15};

\item \yices2 \cite{Dutertre:cav2014} version 2.4.2;

\item \rasat{} \cite{DBLP:conf/cade/TungKO16}, with two different versions:
  \rasat{-0.3} and \rasat{-0.4 exp};

\item \z3 \cite{MouraBjorner2008TACAS} version 4.4.1;

\item \barcelogicCores, our core-based algorithm \cite{Borrallerasetal2011JAR};

\item \barcelogicMaxsmt, our \maxsmt-based algorithm from Section \ref{sec:maxsmt_la};

\item \barcelogicOmt, our \omt-based algorithm from Section \ref{sec:omt_la}.

\end{itemize}

All \textsf{bcl-*} solvers\footnote{Available at
  \url{http://www.lsi.upc.edu/~albert/tocl2017.tgz}.} share
essentially the same underlying SAT engine and QF-LIA theory solver.
Moreover, some strategies are also common:

\begin{itemize}

\item procedure \artificialBounds uses a greedy algorithm
  for approximating the minimum set of variables that have to be
  introduced in the linearization (as shown in
  \cite{Borrallerasetal2011JAR}, computing a set with minimum size is
  NP-complete). For each of these variables we force the domain
  $[-1,1]$, even if variables have true bounds (for ease of
  presentation, we will assume here that true bounds always contain
  $[-1,1]$). This turns out to be useful in practice, as quite often
  satisfiable formulas have solutions with small coefficients. By
  forcing the domain $[-1,1]$, unnecessary case splitting clauses are
  avoided and the size of the linearized formula is reduced.

\item the first time a bound is chosen to be relaxed is handled
  specially. Let us assume it is the first time that a lower bound
  (respectively, an upper bound) of $V$ has to be relaxed. By virtue
  of the remark above, the bound must be of the form $V \geq -1$
  (respectively, $V \leq 1$). Now, if $V$ has a true bound of the form
  $V \geq L$ (respectively, $V \leq U$), then the new bound is the
  true bound. Otherwise, if $V$ does not have a true lower bound
  (respectively, upper bound), then the lower bound is decreased by
  one (respectively, the upper bound is increased by one). Again, this
  is useful to capture the cases in which there are solutions with
  small coefficients.

\item from the second time on, domain relaxation of \barcelogicMaxsmt
  and \barcelogicOmt follows basically what is described in Section
  \ref{sec:domain_relaxation}, except for a correction factor aimed at
  instances in which solutions have some large values. Namely, if
  $V \leq u$ has to be relaxed and in the minimal model $V$ is
  assigned value $U$, then the new upper bound is
  $U + \alpha \cdot \min(\beta,\frac{n}{m})$, where $\alpha$ and
  $\beta$ are parameters, $n$ is the number of times the upper bound
  of $V$ has been relaxed, and $m$ is the number of occurrences of $V$
  in the original formula. As regards \barcelogicCores, a similar
  expression is used in which the current bound $u$ is used instead of
  $U$, since there is no notion of ``best model''. The analogous
  strategy is applied for lower bounds.
\end{itemize}

The experiments were carried out on the StarExec
cluster~\cite{Starexec14}, whose nodes are equipped with Intel Xeon
2.4GHz processors. The memory limit was set to 60 GB, the same as in
the 2016 edition of SMT-COMP. As regards wall clock time, although in
SMT-COMP jobs were limited to 2400 seconds, in our experiments the
timeout was set to 1800 seconds, which is the maximum that StarExec
allowed us.

Two different sources of benchmarks were considered in this evaluation.
The first benchmark suite (henceforth referred to as \textsf{Term})
was already used in the conference version of this
paper~\cite{DBLP:conf/sat/LarrazORR14} and consists of 1934 instances
generated by the constraint-based termination prover described in
\cite{DBLP:conf/fmcad/LarrazORR13}. In these problems non-linear
monomials are quadratic.

The other benchmarks are the examples of QF-NIA in the SMT-LIB
\cite{BarFT-SMTLIB}, which are grouped into the following families:

\begin{itemize}
\item \textsf{AProVE}: 8829 instances
\item \textsf{calypto}: 177 instances
\item \textsf{LassoRanker}: 120 instances
\item \textsf{leipzig}: 167 instances
\item \textsf{mcm}: 186 instances
\item \textsf{UltimateAutomizer}: 7 instances
\item \textsf{UltimateLassoRanker}: 32 instances
\item \textsf{LCTES}: 2 instances
\end{itemize}

Results are displayed in two tables
(Tables~\ref{tab:smt_nia_experiments_all1} and
\ref{tab:smt_nia_experiments_all2}) for the sake of presentation. Rows
represent systems and distinguish between $\Sat$ and $\Unsat$
outcomes. Columns correspond to benchmark families. For each family,
the number of instances is indicated in parentheses. The cells either
show the number of problems of a given family that were solved by a
particular system with a certain outcome, or the total time (in
seconds) to process all of them. The best solver for each family (for
$\Sat$ and for $\Unsat$ examples) is highlighted in bold face.
%
\begin{center}
  \begin{table}%

  \tbl{Experimental evaluation of \SMT(QF-NIA) solvers on benchmark families \textsf{Term}, \textsf{AProVE}, \textsf{calypto} and \textsf{LassoRanker}.\label{tab:smt_nia_experiments_all1}}{%

  \begin{tabular}{|c|c||r|r||r|r||r|r||r|r|}
  \cline{3-10}

    \multicolumn{2}{c|}{} & \multicolumn{2}{c||}{\textsf{Term}}   & \multicolumn{2}{c||}{\textsf{AProVE}} & \multicolumn{2}{c||}{\textsf{calypto}} & \multicolumn{2}{c|}{\textsf{LassoRanker}} \\
    \multicolumn{2}{c|}{} & \multicolumn{2}{c||}{\textsf{(1934)}} & \multicolumn{2}{c||}{\textsf{(8829)}} & \multicolumn{2}{c||}{\textsf{(177)}}   & \multicolumn{2}{c|}{\textsf{(120)}} \\ \cline{3-10}
    \multicolumn{2}{c|}{} & \# p. & time & \# p. & time & \# p. & time & \# p. & time \\ \hline

\mr{ 2}{*}{\aproveNia}         &   \mr{ 1}{*}{$\Sat$}     &       0   &       0.00   &    8,028   &    4,242.65   &   77   &   1,715.02   &      3    &       1.97 \\ \cline{2-10}
                               &   \mr{ 1}{*}{$\Unsat$}   &       0   &       0.00   &        0   &        0.00   &    0   &       0.00   &      0    &       0.00 \\ \hline\hline
\mr{ 2}{*}{\cvc4}              &   \mr{ 1}{*}{$\Sat$}     &      45   &   8,898.80   &    7,892   &  144,767.87   &   25   &      78.29   &      4    &     692.98 \\ \cline{2-10}
                               &   \mr{ 1}{*}{$\Unsat$}   &       0   &       0.00   &       10   &        0.18   &   35   &       0.71   &     71    &       1.26 \\ \hline\hline
\mr{ 2}{*}{\prob}              &   \mr{ 1}{*}{$\Sat$}     &       0   &       0.00   &    7,415   &   19,715.86   &   41   &      85.52   &      3    &       3.07 \\ \cline{2-10}
                               &   \mr{ 1}{*}{$\Unsat$}   &       0   &       0.00   &       16   &       15.70   &   13   &     498.51   &      0    &       0.00 \\ \hline\hline
\mr{ 2}{*}{\smtrat}            &   \mr{ 1}{*}{$\Sat$}     &     232   &  82,122.64   &    8,026   &      313.44   &   79   &     163.58   &      3    &       0.64 \\ \cline{2-10}
                               &   \mr{ 1}{*}{$\Unsat$}   &      15   &   1,377.74   &      221   &    7,654.78   &   89   &     663.89   &     21    &      12.59 \\ \hline\hline
\mr{ 2}{*}{\yices2}            &   \mr{ 1}{*}{$\Sat$}     &   1,830   &  79,764.09   &    7,959   &    3,293.65   &   79   &       6.53   &      4    &       0.16 \\ \cline{2-10}
                               &   \mr{ 1}{*}{$\Unsat$}   &      69   &     940.15   &   \b{764}  & \b{4,964.66}  &\b{97}  &  \b{488.38}  &     97    &     875.44 \\ \hline\hline
\mr{ 2}{*}{\rasat{-0.3}}       &   \mr{ 1}{*}{$\Sat$}     &      20   &   2,444.87   &    7,421   &   35,053.18   &   32   &   3,393.93   &      3    &       2.41 \\ \cline{2-10}
                               &   \mr{ 1}{*}{$\Unsat$}   &       0   &       0.00   &      320   &  554,482.86   &   47   &  30,232.16   &     43    &  75,603.23 \\ \hline\hline
\mr{ 2}{*}{\rasat{-0.4 exp}}   &   \mr{ 1}{*}{$\Sat$}     &      36   &   5,161.97   &    7,745   &   50,695.06   &   31   &     954.16   &      3    &       1.54 \\ \cline{2-10}
                               &   \mr{ 1}{*}{$\Unsat$}   &       4   &   2,454.21   &       18   &      105.59   &   31   &     547.26   &      2    &       2.46 \\ \hline\hline
\mr{ 2}{*}{\z3}                &   \mr{ 1}{*}{$\Sat$}     &     194   &  77,397.16   &    8,023   &   14,790.21   &   79   &     943.03   &      4    &      13.16 \\ \cline{2-10}
                               &   \mr{ 1}{*}{$\Unsat$}   &   \b{70}  &\b{3,459.77}  &      286   &    7,989.62   &   96   &   1,932.11   &    100    &   3,527.34 \\ \hline\hline
\mr{ 2}{*}{\barcelogicCores}   &   \mr{ 1}{*}{$\Sat$}     &   1,857   &   4,396.09   & \b{8,028}  & \b{1,726.49}  &   80   &       6.20   &      4    &       0.09 \\ \cline{2-10}
                               &   \mr{ 1}{*}{$\Unsat$}   &       0   &       0.00   &       15   &        0.41   &   94   &   1,596.99   &     72    &       2.53 \\ \hline\hline
\mr{ 2}{*}{\barcelogicMaxsmt}  &   \mr{ 1}{*}{$\Sat$}     &\b{1,857}  &  \b{811.54}  &    8,027   &    1,763.70   &\b{80}  &    \b{5.74}  &   \b{4}   &    \b{0.08}\\ \cline{2-10}
                               &   \mr{ 1}{*}{$\Unsat$}   &      67   &      31.33   &      202   &       51.50   &   97   &     994.17   & \b{103}   &    \b{2.96}\\ \hline\hline
\mr{ 2}{*}{\barcelogicOmt}     &   \mr{ 1}{*}{$\Sat$}     &   1,854   &   6,420.59   &    8,013   &   25,274.94   &   80   &       6.75   &      4    &       0.10 \\ \cline{2-10}
                               &   \mr{ 1}{*}{$\Unsat$}   &      67   &      34.99   &      203   &       36.18   &   97   &   1,327.95   &    103    &       3.59 \\ \hline
  \end{tabular}
}
\end{table}
\end{center}
\begin{center}
  \begin{table}%

  \tbl{Experimental evaluation of \SMT(QF-NIA) solvers on benchmark families \textsf{leipzig}, \textsf{mcm}, \textsf{UltimateAutomizer (UA)} and \textsf{UltimateLassoRanker (ULR)}.\label{tab:smt_nia_experiments_all2}}{%

  \begin{tabular}{|c|c||r|r||r|r||r|r||r|r|}
  \cline{3-10}

    \multicolumn{2}{c|}{}  & \multicolumn{2}{c||}{\textsf{leipzig}} & \multicolumn{2}{c||}{\textsf{mcm}}   & \multicolumn{2}{c||}{\textsf{UA}} & \multicolumn{2}{c|}{\textsf{ULR}} \\
    \multicolumn{2}{c|}{}  & \multicolumn{2}{c||}{\textsf{(167)}}   & \multicolumn{2}{c||}{\textsf{(186)}} & \multicolumn{2}{c||}{\textsf{(7)}}               & \multicolumn{2}{c|}{\textsf{(32)}} \\ \cline{3-10}

    \multicolumn{2}{c|}{} & \# p. & time & \# p. & time & \# p. & time & \# p. & time \\ \hline

\mr{ 2}{*}{\aproveNia}         &   \mr{ 1}{*}{$\Sat$}     &   161   &   1,459.27   &    0   &          0.00   &   0   &   0.00   &    6   &      5.02 \\ \cline{2-10}
                               &   \mr{ 1}{*}{$\Unsat$}   &     0   &       0.00   &    0   &          0.00   &   0   &   0.00   &    0   &      0.00 \\ \hline\hline
\mr{ 2}{*}{\cvc4}              &   \mr{ 1}{*}{$\Sat$}     &\b{162}  &  \b{237.63}  &\b{48}  &  \b{22,899.02}  &   0   &   0.00   &    6   &      3.76 \\ \cline{2-10}
                               &   \mr{ 1}{*}{$\Unsat$}   &     0   &       0.00   &    0   &          0.00   &   6   &   0.06   &   22   &     69.19 \\ \hline\hline
\mr{ 2}{*}{\prob}              &   \mr{ 1}{*}{$\Sat$}     &    50   &      54.81   &    1   &      1,631.89   &   0   &   0.00   &    4   &      5.58 \\ \cline{2-10}
                               &   \mr{ 1}{*}{$\Unsat$}   &     0   &       0.00   &    0   &          0.00   &   1   &   1.02   &    1   &      1.34 \\ \hline\hline
\mr{ 2}{*}{\smtrat}            &   \mr{ 1}{*}{$\Sat$}     &   160   &   2,827.37   &   21   &      2,516.21   &   0   &   0.00   &    6   &      0.86 \\ \cline{2-10}
                               &   \mr{ 1}{*}{$\Unsat$}   &     0   &       0.00   &    0   &          0.00   &   1   &   2.44   &   24   &    186.14 \\ \hline\hline
\mr{ 2}{*}{\yices2}            &   \mr{ 1}{*}{$\Sat$}     &    92   &     715.04   &   11   &      5,816.44   &   0   &   0.00   & \b{6}  &   \b{0.05} \\ \cline{2-10}
                               &   \mr{ 1}{*}{$\Unsat$}   &  \b{1}  &    \b{0.01}  &    0   &          0.00   &\b{7}  &\b{0.02}  &\b{26}  &  \b{11.07}\\ \hline\hline
\mr{ 2}{*}{\rasat{-0.3}}       &   \mr{ 1}{*}{$\Sat$}     &    32   &  15,758.07   &    2   &      1,787.57   &   0   &   0.00   &    2   &      5.88 \\ \cline{2-10}
                               &   \mr{ 1}{*}{$\Unsat$}   &     1   &   1,800.07   &\b{99}  & \b{178,204.54}  &   1   &   5.28   &    1   &  1,351.68 \\ \hline\hline
\mr{ 2}{*}{\rasat{-0.4 exp}}   &   \mr{ 1}{*}{$\Sat$}     &   134   &  17,857.21   &    8   &      3,309.13   &   0   &   0.00   &    3   &      1.60 \\ \cline{2-10}
                               &   \mr{ 1}{*}{$\Unsat$}   &     0   &       0.00   &    0   &          0.00   &   1   &   8.08   &    1   &      1.50 \\ \hline\hline
\mr{ 2}{*}{\z3}                &   \mr{ 1}{*}{$\Sat$}     &   162   &   1,472.00   &   23   &      3,906.84   &   0   &   0.00   &    6   &      0.34 \\ \cline{2-10}
                               &   \mr{ 1}{*}{$\Unsat$}   &     0   &       0.00   &    7   &      7,127.61   &   7   &   0.54   &   26   &     45.20 \\ \hline\hline
\mr{ 2}{*}{\barcelogicCores}   &   \mr{ 1}{*}{$\Sat$}     &   158   &   3,596.74   &   15   &      1,160.10   &   0   &   0.00   &    6   &      0.33 \\ \cline{2-10}
                               &   \mr{ 1}{*}{$\Unsat$}   &     0   &       0.00   &    0   &          0.00   &   1   &   0.06   &   24   &     32.87 \\ \hline\hline
\mr{ 2}{*}{\barcelogicMaxsmt}  &   \mr{ 1}{*}{$\Sat$}     &   153   &   4,978.91   &   17   &      1,004.25   &   0   &   0.00   &    6   &      0.31 \\ \cline{2-10}
                               &   \mr{ 1}{*}{$\Unsat$}   &     0   &       0.00   &    0   &          0.00   &   1   &   0.02   &   26   &     28.56 \\ \hline\hline
\mr{ 2}{*}{\barcelogicOmt}     &   \mr{ 1}{*}{$\Sat$}     &   148   &   7,351.45   &   19   &      2,937.99   &   0   &   0.00   &    6   &      0.34 \\ \cline{2-10}
                               &   \mr{ 1}{*}{$\Unsat$}   &     0   &       0.00   &    0   &          0.00   &   1   &   0.02   &   26   &     29.36 \\ \hline

  \end{tabular}
}
\end{table}
\end{center}
Due to lack of space, the results for family \textsf{LCTES} do not
appear in the tables. This family consists of just two benchmarks, one
of which was not solved by any system. The other instance was only
solved by \cvc4, which reported $\Unsat$ in 0.5 seconds.

As the tables indicate, overall our techniques perform well on
$\Sat$ instances, being the results particularly favourable for the
\textsf{Term} family. This is natural: linearizing by case splitting
is aimed at finding solutions quickly without having to pay the toll
of heavy-weight non-linear reasoning. If satisfiable instances have
solutions with small domains (which is often the case, for instance,
when they come from our program analysis applications), our techniques
usually work well. On the other hand, for families \textsf{Aprove},
\textsf{leipzig} and \textsf{mcm} the results are only comparable or
slightly worse than those obtained with other tools\footnote{However,
  it must be remarked that we detected several inconsistencies between
  \rasat{-0.3} and the rest of the solvers in the family \textsf{mcm},
  which makes the results of this tool unreliable.}. One of the reasons
could be that, at least for \textsf{Aprove} and \textsf{leipzig},
formulas have a very simple Boolean structure: they are essentially
conjunctions of literals and few clauses (if any). For this particular
kind of problems CAD-based techniques such as those implemented in
\yices2 and \z3, which are precisely targeted at conjunctions of
non-linear literals, may be more adequate.

Regarding $\Unsat$ instances, it can be seen that our approaches,
while often competitive, can be outperformed by other tools in some
families. Again, this is not surprising: linearizing may not be
sufficient to detect unsatisfiability when deep non-linear reasoning
is required. On the other hand, sometimes there may be a purely linear
argument that proves that an instance is unsatisfiable. Our techniques
can be effective in these situations, which may be relatively frequent
depending on the application. This would be the case of families
\textsf{Term}, \textsf{calypto}, \textsf{LassoRanker} and
\textsf{ULR}.

Comparing our techniques among themselves, overall \barcelogicMaxsmt
tends to give the best results in terms of number of solved $\Sat$ and
$\Unsat$ instances and timings. For example, we can see that
\barcelogicCores proves many fewer unsatisfiable instances than
model-guided approaches. The reason is the following. Let $F_0$ be a
formula in QF-NIA, and $F$ be a linearization of $F_0$ computed with
artificial bounds $B$.
Let us assume that $F$ is unsatisfiable. In this case, when the
algorithm in \barcelogicCores tests the satisfiability of $F \land B$,
it finds that it is unsatisfiable. Then, if we are lucky and an
unsatisfiable core that only uses clauses from $F$ is obtained, then
it can be concluded that $F_{0}$ is unsatisfiable immediately.
However, there may be other unsatisfiable cores of $F \land B$, which
use artificial bounds \footnote{For the sake of efficiency,
  \barcelogicCores does not guarantee that cores are minimal with
  respect to subset inclusion: computing \emph{minimal unsatisfiable
    sets} \cite{DBLP:journals/aicom/BelovLM12} to eliminate irrelevant
  clauses implies an overhead that in our experience does not pay off.
  But even if minimality were always achieved, there could still be
  unsatisfiable cores in $F \land B$ using artificial bounds.}. Using
such a core leads to performing yet another (useless) iteration of
domain relaxation. Unfortunately the choice of the unsatisfiable core
depends on the way the search space is explored, which does not take
into account whether bounds are original or artificial so as not to
interfere with the Boolean engine heuristics. On the other hand,
model-guided approaches always detect when the linearization is
unsatisfiable.
As for $\Sat$ instances, the number of solved problems of
\barcelogicCores is similar to that of \barcelogicMaxsmt, but the
latter tends to be faster.

Regarding \barcelogicOmt, it turns out that, in general, the
additional iterations required in the simplex algorithm to perform the
optimization are too expensive. Moreover, after inspecting the traces
we have confirmed that as Example \ref{ex:omt} suggested, \barcelogicOmt
enlarges the domains too slowly, which is hindering the search.

\subsection{Variants}
\label{sec:variants}

According to the experiments in Section
\ref{sec:smt_nia_experiments_all}, altogether the approach based on
\MAXSMT(QF-LIA) gives the best results among our methods. In this
section we propose several ideas for improving it further.

\subsubsection{Non-incremental Strategy}
\label{sec:non-incremental}

A common feature of the procedures for solving \SMT(QF-NIA) described
in Sections \ref{sec:unsatisfiable_cores}, \ref{sec:maxsmt_la} and
\ref{sec:omt_la} is that, when no model of the linearization is found
that satisfies all artificial bounds, the domains are enlarged. Thus,
iteration after iteration, the number of case splitting clauses
increases. In this sense, the aforementioned methods are
\emph{incremental}. A disadvantage of this incrementality is that,
after some iterations, the formula size may easily become too big to
be manageable.

On the other hand, instead of enlarging a domain, one can follow a
\emph{non-incremental} strategy and \emph{replace} the domain by another one
that might not include it. For example, in the model-guided
approaches, when computing the new domain for a variable one may
discard the current domain and for the next iteration take an interval
centered at the value in the minimal model (procedure
\relaxDomainsMinModelsNonInc, described in Algorithm
\ref{alg:relax_domains_min_models_non_inc}). The procedure for
updating the formula has to be adapted accordingly too, so that the
case splitting clauses correspond to the values in the artificial
domains (procedure \updateNonInc, shown in Algorithm
\ref{alg:update_non_inc}). In this fashion one can control precisely
the number of case splitting clauses, and therefore the size of the
formula.

\begin{algorithm}[t]
\begin{lstlisting}
set $\relaxDomainsMinModelsNonInc$(set $\bounds$, map $\model$) { // returns the new set of artificial bounds
  let $S \subseteq \{b \;|\; b \in \bounds, \model \not\models b\}$ such that $S \neq \emptyset$;               // choose among bounds violated by the model
  $W = \{ \textit{var}(b) \st b \in S\}$;                                                      // variables with domain to be relaxed
  $\bounds' = \{ b \st b \in \bounds \land \textit{var}(b) \not\in W \}$;
  for ($V$ in $W$)
    $\bounds'$ = $\bounds'$ $\cup$ {$\model(V) - R \leq V \leq \model(V) + R$};                // for a parameter $R > 0$
  return $\bounds'$;
}
\end{lstlisting}
\caption{Procedure \relaxDomainsMinModelsNonInc}
\label{alg:relax_domains_min_models_non_inc}
\end{algorithm}

\begin{algorithm}[t]
\begin{lstlisting}
formula $\updateNonInc$(formula $F$, set $\bounds$, set $\bounds'$) {
  $W$ = $\{ \textit{var}(b) \st b \in \bounds' - \bounds\}$;                                                     // variables whose domain has changed
  $F'$ = remove_case_splits($F$, $W$);                                  // remove outdated case splitting clauses
  for ($V$ in $W$) {
    let $L$, $U$ such that $L \leq V\leq U \in \bounds'$;
    for ($K$ in [$L$, $U$])
       for ($Q$ such that $V$ == linearization_variable($Q$))      // $V$ was used to linearize monomial $Q$
          $F'$ = $F'$ $\cup$ { $V = K \limplies V_Q =$ evaluate($Q$, $V$, $K$)};    // $V_Q$ is the variable standing for $Q$
  }
  return $F' \cup \{\bigvee_{b \in \bounds} \lnot b \}$;                                                    // forbid $\bounds$: no solution there
}
\end{lstlisting}
  \caption{Procedure \updateNonInc}
  \label{alg:update_non_inc}
\end{algorithm}

Since monotonicity of domains from one iteration to the next one is
now not maintained, this approach requires bookkeeping so as to avoid
repeating the same choice of artificial domains. One way to implement
this is to add clauses that forbid each of the combinations of domains
that have already been tried and with which no model of the original
formula was found. Namely, let $\bounds$ be such a combination of
artificial bounds. We add the (hard) clause
$\lor_{b \in \bounds} \lnot b$, which forces that at least one of the
bounds in $\bounds$ cannot hold. Now, in any following iteration, if
the minimal model $M$ of the linearization does not satisfy the
non-linear formula, the new set of artificial bounds $\bounds'$ is
such that $M \models \bounds'$. This together with
$M \not \models \bounds$ implies $\bounds' \not \models \bounds$, and
therefore artificial bounds cannot be repeated. See procedure
$\updateNonInc$ in Algorithm \ref{alg:update_non_inc}.

Finally note that, although this alternative strategy for producing
new artificial bounds can in principle be adapted to either of the
model-guided methods, it makes the most sense for the
\MAXSMT(QF-LIA)-based procedure. The reason is that, being
model-guided, in this approach the next domains to be considered are
determined by the minimal model and, as already observed in Section
\ref{sec:maxsmt_la}, this minimal model may assign large values to
variables and thus lead to intractable formula growth.

\begin{example}
\label{ex:noninc}
Let us take the formula and artificial bounds of the running example.
We resume Example \ref{ex:maxsmt}, where the following minimal
solution of cost $1$ was shown: $t = v_{t^2} = 1$, $x = v_{tx} = 4$
and $w = v_{w^2} = y = v_{y^2} = v_{x^2} = 0$, being $x \leq 1$ the
only violated artificial bound. Now, taking a radius $R = 2$ for the
interval around $x = 4$, in the next iteration the following
artificial bounds would be considered: $-1 \leq t, w, y \leq 1$ and
$2 \leq x \leq 6$. Moreover, the following clause would be added to
the linearization:
  $$
  -1 > t \;\;\lor\;\; 1 < t \;\;\lor\;\;
  -1 > x \;\;\lor\;\; 1 < x \;\;\lor\;\;
  -1 > w \;\;\lor\;\; 1 < w \;\;\lor\;\;
  -1 > y \;\;\lor\;\; 1 < y
  $$
  together with
$$
  \begin{array}{lcl c lcl c}
( x = 2 & \limplies & v_{x^2} =  4  ) & \;\;\;\land\\
( x = 3 & \limplies & v_{x^2} =  9  ) & \;\;\;\land\\
( x = 4 & \limplies & v_{x^2} =  16 ) & \;\;\;\land\\
( x = 5 & \limplies & v_{x^2} =  25 ) & \;\;\;\land\\
( x = 6 & \limplies & v_{x^2} =  36 ), & \\
  \end{array}
  $$
  while clauses
$$
  \begin{array}{lcl c lcl c}
( x =-1 & \limplies & v_{x^2} =  1 ) & \;\;\;\land\\
( x = 0 & \limplies & v_{x^2} =  0 ) & \;\;\;\land\\
( x = 1 & \limplies & v_{x^2} =  1 ) & \\
  \end{array}
  $$
would be removed.\hfill$\blacksquare$
\end{example}

\subsubsection{Optimality Cores}
\label{sec:optimality-cores}

When following the approach presented in Section
\ref{sec:non-incremental}, one needs to keep track of the combinations
of domains that have already been attempted, in order to avoid
repeating work and possibly entering into cycles. As pointed out
above, this can be achieved for instance by adding clauses that
exclude these combinations of domains. From the SMT perspective, these
clauses can be viewed as \emph{conflict explanations}, if one
understands a conflict as a choice of artificial domains that does not
lead to a solution to the original non-linear problem. Following the
SMT analogy, it is important that explanations are as short as
possible. In this section we present a technique aimed at reducing the
size of these explanations.

Let us focus on the model-guided procedures, in which at each
iteration one minimizes a cost function.

\begin{definition}
  Let $F$ be a clause set with variables $X$, and let
  $\cost: X \goes \reals$ be a function.
  Let $C \subseteq F$ be a subset of the clauses and
  $\cost': X \goes \reals$ be another function.
  We say that the pair $(C, cost')$ is an
  \emph{optimality core} of the minimization problem
  $\min\{ \cost(\model) \st \model \models F \}$ if the following conditions hold:

  \begin{enumerate}

  \item $cost' \leq cost$; and

  \item $\min\{ \cost'(\model) \st \model \models C \} = \min\{ \cost(\model) \st \model \models F \}.$
  \end{enumerate}

\end{definition}

The interest of optimality cores in our context is that they allow
identifying a subset of the artificial bounds that is enough for
soundly discarding the current combination of domains (and possibly
many more). Let us now describe how optimality cores and these subsets
of artificial bounds may be obtained in the \maxsmt approach. In this
case, one searches the minimum number of violated artificial bounds.
In a similar way to resolution refutations obtained from unsatisfiable
instances, after each call to the \MAXSMT(QF-LIA) solver on the
linearization with soft artificial bounds one may retrieve a lower
bound certificate \cite{DBLP:journals/jar/LarrosaNOR11}. This
certificate consists essentially of a tree of \emph{cost resolution}
steps, and proves that any model of the linearization will violate at
least as many artificial bounds as the reported optimal model. Now
consider the leaves of this tree. Let $H$ be those leaves which are
hard clauses (corresponding to clauses of the linearization), and $S$
be those which are soft (corresponding to soft bounds). If we call
$\cost_S$ the function that counts the number of bounds from $S$ that
are not satisfied, from the definition it follows that $(H, \cost_S)$
is an optimality core. In particular, we are interested in the set of
artificial bounds $S$: if the optimal model of the \maxsmt problem has
positive cost, then there is no model of the linearization that can
also satisfy $S$. So we can soundly add the clause
$\lor_{b \in S} \lnot b$, which replaces the (potentially much) longer
clause $ \lor_{b \in \bounds} \lnot b$ in Algorithm
\ref{alg:update_non_inc}.

\begin{example}
  Once more let us consider the running example. We proceed as in
  Example \ref{ex:noninc}, but instead of adding the clause
  $$
  -1 > t \;\;\lor\;\; 1 < t \;\;\lor\;\;
  -1 > x \;\;\lor\;\; 1 < x \;\;\lor\;\;
  -1 > w \;\;\lor\;\; 1 < w \;\;\lor\;\;
  -1 > y \;\;\lor\;\; 1 < y
  $$
  we add
  $$
  -1 > t \;\;\lor\;\; 1 < t \;\;\lor\;\;
  -1 > x \;\;\lor\;\; 1 < x \;\;\lor\;\;
  1 < y,
  $$
  i.e., we discard the literals $-1 > w$,\, $1 < w$ and $-1 > y$.
  Following the above notation, we can do that since $(H,\cost_S)$ is an
  optimality core, where $H$ is
$$ \{  v_{tx} + y \geq 4,\quad
t =-1 \limplies v_{tx} = -x, \quad
t = 0 \limplies v_{tx} =  0, \quad
t = 1 \limplies v_{tx} =  x \}
$$
and $S$ is $$\{ [-1 \leq t, 1],\quad [t \leq 1, 1],\quad [-1 \leq x, 1],\quad [x \leq 1, 1],\quad [y \leq 1, 1] \}.$$
\hfill$\blacksquare$
\end{example}

\subsection{Experimental Evaluation of  \MAXSMT(QF-LIA)-based Approaches}
\label{sec:smt_nia_experiments_maxsmt}

In this section we evaluate experimentally the variations of the
\MAXSMT(QF-LIA) approach proposed in Sections
\ref{sec:non-incremental} and \ref{sec:optimality-cores}. In addition
to the benchmarks used in Section \ref{sec:smt_nia_experiments_all},
we have also considered instances produced by our constraint-based
termination prover \tool{VeryMax}
(\url{http://www.cs.upc.edu/~albert/VeryMax.html}) on the divisions of
the termination competition termCOMP 2016~\cite{TermComp} in which it
participated, namely \textsf{Integer Transition Systems} and
\textsf{C Integer}. Since internally \tool{VeryMax} generates
\MAXSMT(QF-NIA) rather than \SMT(QF-NIA) problems, soft clauses were
removed. Given the huge number of obtained examples, of the order of
tens of thousands, we could not afford carrying out the experiments
will all the tools considered in Section
\ref{sec:smt_nia_experiments_all}, and had to restrict the evaluation
to the competing solvers that overall performed the best, namely \z3
and \yices2. Hence, in addition to these two, the following solvers
are considered here:

\begin{itemize}
\item \barcelogicMaxsmt, the \MAXSMT(QF-LIA)-based approach as in
  Section \ref{sec:smt_nia_experiments_all};
\item \barcelogicJmpNoCores, the non-incremental algorithm from Section \ref{sec:non-incremental};
\item \barcelogicJmpCores, the non-incremental algorithm that uses
  optimality cores from Section \ref{sec:optimality-cores}.
\end{itemize}

Moreover, to further reduce the time required by the experiments, we
decided to discard those benchmarks which could be solved both by
\yices2 and \barcelogicMaxsmt in negligible time (less than 0.5
seconds). After this filtering, finally 20354 and 2019 benchmarks were
included in families \textsf{Integer Transition Systems} and \textsf{C
  Integer}, respectively.

Results are displayed in Tables~\ref{tab:smt_nia_experiments_sel1},
\ref{tab:smt_nia_experiments_sel2} and
\ref{tab:smt_nia_experiments_sel3}, following the same format as in
Section \ref{sec:smt_nia_experiments_all}.

\begin{center}
  \begin{table}%

  \tbl{Experimental evaluation of \SMT(QF-NIA) solvers on benchmark families \textsf{Term}, \textsf{AProVE}, \textsf{calypto} and \textsf{LassoRanker}.\label{tab:smt_nia_experiments_sel1}}{%

  \begin{tabular}{|c|c||r|r||r|r||r|r||r|r|}
  \cline{3-10}

    \multicolumn{2}{c|}{} & \multicolumn{2}{c||}{\textsf{Term}}   & \multicolumn{2}{c||}{\textsf{AProVE}} & \multicolumn{2}{c||}{\textsf{calypto}} & \multicolumn{2}{c|}{\textsf{LassoRanker}} \\
    \multicolumn{2}{c|}{} & \multicolumn{2}{c||}{\textsf{(1934)}} & \multicolumn{2}{c||}{\textsf{(8829)}} & \multicolumn{2}{c||}{\textsf{(177)}}   & \multicolumn{2}{c|}{\textsf{(120)}} \\ \cline{3-10}
    \multicolumn{2}{c|}{} & \# p. & time & \# p. & time & \# p. & time & \# p. & time \\ \hline

\mr{2}{*}{\yices2}               & \mr{1}{*}{$\Sat$}     &    1,830   &  79,764.09   &    7,959   &     3,293.65   &   79   &       6.53   &      4  &       0.16  \\ \cline{2-10}
                                 & \mr{1}{*}{$\Unsat$}   &       69   &     940.15   &   \b{764}  &  \b{4,964.66}  &   97   &     488.38   &     97  &     875.44  \\ \hline\hline
\mr{2}{*}{\z3}                   & \mr{1}{*}{$\Sat$}     &      194   &  77,397.16   &    8,023   &    14,790.21   &   79   &     943.03   &      4  &      13.16  \\ \cline{2-10}
                                 & \mr{1}{*}{$\Unsat$}   &    \b{70}  &\b{3,459.77}  &      286   &     7,989.62   &   96   &   1,932.11   &    100  &   3,527.34  \\ \hline\hline
\mr{2}{*}{\barcelogicMaxsmt}     & \mr{1}{*}{$\Sat$}     &    1,857   &     811.54   &    8,027   &     1,763.70   &   80   &       5.74   &   \b{4} &    \b{0.08} \\ \cline{2-10}
                                 & \mr{1}{*}{$\Unsat$}   &       67   &      31.33   &      202   &        51.50   &   97   &     994.17   & \b{103} &    \b{2.96} \\ \hline\hline
\mr{2}{*}{\barcelogicJmpNoCores} & \mr{1}{*}{$\Sat$}     & \b{1,857}  &  \b{276.20}  & \b{8,028}  &  \b{1,777.97}  &\b{80}  &    \b{5.50}  &      4  &       0.10  \\ \cline{2-10}
                                 & \mr{1}{*}{$\Unsat$}   &       67   &     191.66   &      202   &        51.72   &\b{97}  &  \b{155.32}  &    103  &      13.63  \\ \hline\hline
\mr{2}{*}{\barcelogicJmpCores}   & \mr{1}{*}{$\Sat$}     &    1,857   &     349.48   &    8,028   &     1,825.93   &   80   &       5.54   &      4  &       0.10  \\ \cline{2-10}
                                 & \mr{1}{*}{$\Unsat$}   &       67   &     184.41   &      202   &        57.52   &   97   &     273.15   &    103  &       9.31  \\ \hline
  \end{tabular}
}
\end{table}
\end{center}
\begin{center}
  \begin{table}%

  \tbl{Experimental evaluation of \SMT(QF-NIA) solvers on benchmark families \textsf{leipzig}, \textsf{mcm}, \textsf{UltimateAutomizer (UA)} and \textsf{UltimateLassoRanker (ULR)}.\label{tab:smt_nia_experiments_sel2}}{%

  \begin{tabular}{|c|c||r|r||r|r||r|r||r|r|}
  \cline{3-10}

    \multicolumn{2}{c|}{}  & \multicolumn{2}{c||}{\textsf{leipzig}} & \multicolumn{2}{c||}{\textsf{mcm}}   & \multicolumn{2}{c||}{\textsf{UA}} & \multicolumn{2}{c|}{\textsf{ULR}} \\
    \multicolumn{2}{c|}{}  & \multicolumn{2}{c||}{\textsf{(167)}}   & \multicolumn{2}{c||}{\textsf{(186)}} & \multicolumn{2}{c||}{\textsf{(7)}}               & \multicolumn{2}{c|}{\textsf{(32)}} \\ \cline{3-10}

    \multicolumn{2}{c|}{} & \# p. & time & \# p. & time & \# p. & time & \# p. & time \\ \hline

\mr{2}{*}{\yices2}               & \mr{1}{*}{$\Sat$}     &    92   &     715.04   &   11   &    5,816.44   &   0   &   0.00   & \b{6}  &   \b{0.05}\\ \cline{2-10}
                                 & \mr{1}{*}{$\Unsat$}   &  \b{1}  &    \b{0.01}  &    0   &        0.00   &\b{7}  &\b{0.02} & \b{26}  &  \b{11.07} \\ \hline\hline
\mr{2}{*}{\z3}                   & \mr{1}{*}{$\Sat$}     &\b{162}  &\b{1,472.00}  &\b{23}  & \b{3,906.84}  &   0   &   0.00   &    6   &      0.34 \\ \cline{2-10}
                                 & \mr{1}{*}{$\Unsat$}   &     0   &       0.00   & \b{7}  & \b{7,127.61}  &   7   &   0.54   &   26   &     45.20 \\ \hline\hline
\mr{2}{*}{\barcelogicMaxsmt}     & \mr{1}{*}{$\Sat$}     &   153   &   4,978.91   &   17   &    1,004.25   &   0   &   0.00   &    6   &      0.31 \\ \cline{2-10}
                                 & \mr{1}{*}{$\Unsat$}   &     0   &       0.00   &    0   &        0.00   &   1   &   0.02   &   26   &     28.56 \\ \hline\hline
\mr{2}{*}{\barcelogicJmpNoCores} & \mr{1}{*}{$\Sat$}     &   155   &   5,193.20   &   23   &    3,983.74   &   0   &   0.00   &    6   &      0.33 \\ \cline{2-10}
                                 & \mr{1}{*}{$\Unsat$}   &     0   &       0.00   &    0   &        0.00   &   1   &   0.02   &   26   &     28.47 \\ \hline\hline
\mr{2}{*}{\barcelogicJmpCores}   & \mr{1}{*}{$\Sat$}     &   156   &   3,602.32   &   19   &    2,037.77   &   0   &   0.00   &    6   &      0.40 \\ \cline{2-10}
                                 & \mr{1}{*}{$\Unsat$}   &     0   &       0.00   &    0   &        0.00   &   1   &   0.02   &   26   &     31.28 \\ \hline
  \end{tabular}
}
\end{table}
\end{center}
\begin{center}
  \begin{table}%

    \tbl{Experimental evaluation of \SMT(QF-NIA) solvers on benchmark families \textsf{LCTES}, \textsf{Integer Transition Systems (ITS)} and \textsf{C Integer (CI)}.\label{tab:smt_nia_experiments_sel3}}{%

  \begin{tabular}{|c|c||r|r||r|r||r|r|}
  \cline{3-8}

    \multicolumn{2}{c|}{}  & \multicolumn{2}{c||}{\textsf{LCTES}} & \multicolumn{2}{c||}{\textsf{ITS}}   & \multicolumn{2}{c|}{\textsf{CI}} \\
    \multicolumn{2}{c|}{}  & \multicolumn{2}{c||}{\textsf{(2)}}   & \multicolumn{2}{c||}{\textsf{(20354)}} & \multicolumn{2}{c|}{\textsf{(2019)}} \\ \cline{3-8}

    \multicolumn{2}{c|}{} & \# p. & time & \# p. & time & \# p. & time \\ \hline

\mr{2}{*}{\yices2}                &  \mr{1}{*}{$\Sat$}    &  0  &  0.00  &    8,408   &   471,160.33   &   714   &    84,986.50   \\ \cline{2-8}
                                  &  \mr{1}{*}{$\Unsat$}  &  0  &  0.00  & \b{4,085}  &\b{142,965.19}  &   246   &    24,498.79   \\ \hline\hline
\mr{2}{*}{\z3}                    &  \mr{1}{*}{$\Sat$}    &  0  &  0.00  &    5,993   &   784,681.66   &   566   &    16,827.79   \\ \cline{2-8}
                                  &  \mr{1}{*}{$\Unsat$}  &  0  &  0.00  &    2,249   &   504,022.31   &\b{504}  & \b{17,919.88}  \\ \hline\hline
\mr{2}{*}{\barcelogicMaxsmt}      &  \mr{1}{*}{$\Sat$}    &  0  &  0.00  &   11,321   &   262,793.96   &   895   &     6,530.07   \\ \cline{2-8}
                                  &  \mr{1}{*}{$\Unsat$}  &  0  &  0.00  &    2,618   &    35,838.06   &   148   &    14,481.72   \\ \hline\hline
\mr{2}{*}{\barcelogicJmpNoCores}  &  \mr{1}{*}{$\Sat$}    &  0  &  0.00  &\b{11,522} & \b{246,918.87}  &\b{943}  & \b{15,074.33}  \\ \cline{2-8}
                                  &  \mr{1}{*}{$\Unsat$}  &  0  &  0.00  &    2,502   &    51,699.62   &   129   &     1,722.72   \\ \hline\hline
\mr{2}{*}{\barcelogicJmpCores}    &  \mr{1}{*}{$\Sat$}    &  0  &  0.00  &   11,504   &   244,201.03   &   941   &    12,174.59   \\ \cline{2-8}
                                  &  \mr{1}{*}{$\Unsat$}  &  0  &  0.00  &    2,573   &    49,572.32   &   142   &     7,394.77   \\ \hline
  \end{tabular}
}
\end{table}
\end{center}

These results confirm that, in general, our techniques work well on
$\Sat$ instances: except for families \textsf{leipzig}, \textsf{mcm}
and \textsf{UA}, the best tool is one of the \textsf{bcl-*} solvers.
The gap with respect to
\yices2 and \z3 is particularly remarkable on benchmarks coming from
our termination proving applications (families \textsf{Term},
\textsf{Integer Transition Systems} and \textsf{C Integer}).

On the other hand, as was already justified in Section
\ref{sec:smt_nia_experiments_all}, regarding $\Unsat$ problems, in
some families the \textsf{bcl-*} solvers are clearly outperformed by
the CAD-based techniques of \yices2 and \z3. This suggests that a
mixed approach that used our methods as a filter and that fell
back to CAD after some time threshold could possibly take the best of
both worlds.

Comparing our techniques among themselves, there is not an overall
clear winner. For $\Sat$ examples, it can be seen that the
non-incremental approach is indeed a useful heuristic:
\barcelogicJmpNoCores tends to perform better, being the biggest
difference in the \textsf{Integer Transition Systems} and \textsf{C
  Integer} families. As regards optimality cores, as could be expected
on $\Sat$ instances they do not prove profitable and result into a
slight overhead of \barcelogicJmpCores with respect to
\barcelogicJmpNoCores. On the other hand, on $\Unsat$ examples quite
often (namely, families \textsf{Term}, \textsf{LassoRanker},
\textsf{Integer Transition Systems} and \textsf{C Integer}) the
shorter conflict clauses discarding previous combinations of
artificial domains help in detecting unsatisfiability more
efficiently. Still, for this kind of instances \barcelogicMaxsmt is
usually the best of the three, since fewer iterations of the loop in
procedure \solveSMTQFNIAMinModels are required to prove that the
formula is unsatisfiable.

\section{Solving \MAXSMT(QF-NIA)}
\label{sec:maxsmt_nia}

This section is devoted to the extension of our techniques for
\SMT(QF-NIA) to \MAXSMT(QF-NIA), which has a wide range of
applications, e.g. in termination and non-termination proving
\cite{DBLP:conf/fmcad/LarrazORR13,DBLP:conf/cav/LarrazNORR14} as well
as safety analysis \cite{DBLP:conf/fmcad/BrockschmidtLOR15}. Taking
into account the results of the experiments in Sections
\ref{sec:smt_nia_experiments_all} and
\ref{sec:smt_nia_experiments_maxsmt}, we will choose the
\MAXSMT(QF-LIA) approaches as \SMT(QF-NIA) solving engines for the
rest of this article. In particular, in the description of the
following algorithms we will take as a reference the first version
explained in Section \ref{sec:maxsmt_la}, since adapting the
algorithms to the variations from Sections \ref{sec:non-incremental}
and \ref{sec:optimality-cores} is easy.

\subsection{Algorithm}
\label{sec:maxsmt_nia_algorithm}

We will represent the input $F_0$ of a \MAXSMT(QF-NIA) instance as a
conjunction of a set of hard clauses
$\hard_0 = \{ C_1, \cdots, C_n \}$ and a set of soft clauses
$\soft_0 = \{ [D_1, \Weight_1], \cdots, [D_m, \Weight_m]\}$. The aim
is to decide whether there exist assignments $\alpha$ such that
$\alpha \models \hard_0$, and if so, to find one such that
$\sum\{ \Weight \st {[D, \Weight] \in \soft_0, \alpha \not \models
  D}\}$ is minimized.

\begin{algorithm}[t]
\begin{lstlisting}
$\langle$status, map$\rangle$ $\solveMaxSMTQFNIA$(formula $F_0$) { // returns if $H_0$ is satisfiable and best model wrt. $S_0$
                                                         // $H_0$ are the hard clauses of $F_0$ and $S_0$ the soft ones
  $\bounds$ = artificial_bounds($F_0$);
  $\langle H,S \rangle$ = linearize($F_0$, $\bounds$);
  $\bsf$ = $\undef$;                                                  // best model found so far
  $\msc$ = $\infty$;                                                  // maximum soft cost we can afford
  while (not timed_out()) {
    $\langle \status, \model \rangle$ = $\optimizeQFLIAMaxSMTthreshold$($H$, $S$, $B$, $\msc$);
    if ($\status$ == $\Unsat$)
      if ($\bsf$ == $\undef$) return $\langle\Unsat, \undef\rangle$;
      else                  return $\langle\Sat, \bsf\rangle$;
    else if ($\cost_\bounds(\model)$ == $0$) {
      $\bsf$ = $\model$ ;
      $\msc$ = $\cost_\soft(\model) - 1$;                                // let us assume costs are natural numbers
    }
    else {
      $\bounds'$ = $\relaxDomainsMinModels$($\bounds$, $\model$);
      $H$ = $\update$($H$, $\bounds$, $\bounds'$);                                        // add case splitting clauses to the hard part
      $\bounds$ = $\bounds'$;
    }
  }
  return $\langle\Unknown, \undef\rangle$;
}
\end{lstlisting}
\caption{Algorithm for solving \MAXSMT(QF-NIA)}
\label{alg:maxsmt_nia}
\end{algorithm}

\begin{algorithm}[t]
\begin{lstlisting}
$\langle$status, map$\rangle$ $\optimizeQFLIAMaxSMTthreshold$(formula $H$, formula $S$, set $\bounds$, number $\mathit{msc}$) {
  $F' = H \cup S$;                                               // $H$ are hard clauses, $S$ are soft
  for ($V \geq L$ in $\bounds$)
    $F'$ = $F'$ $\cup$ $\{ [V \geq L, 1]\}$;                                     // added as a soft clause, e.g. with weight 1
  for ($V \leq U$ in $\bounds$)
    $F'$ = $F'$ $\cup$ $\{ [V \leq U, 1]\}$;                                     // added as a soft clause, e.g. with weight 1

  return $\solveMaxSMTQFLIA$($F'$, $\mathit{msc}$);        // call to \maxsmt solver
}
\end{lstlisting}
\caption{Procedure $\optimizeQFLIAMaxSMTthreshold$}
\label{alg:optimize_QF_LIA:maxsmt_threshold}
\end{algorithm}

Procedure $\solveMaxSMTQFNIA$ for solving \MAXSMT(QF-NIA) is shown in
Algorithm \ref{alg:maxsmt_nia}. In its first step, as usual the initial
artificial bounds $\bounds$\footnote{We will abuse notation and
  represent with $\bounds$ both the set of artificial bounds and also
  the corresponding set of weighted clauses. The exact meaning will be
  clear from the context.} are chosen (procedure \artificialBounds),
with which the input formula $F_0 \equiv \hard_0 \land \soft_0$ is
linearized (procedure \linearize). As a result, a weighted linear
formula is obtained with hard clauses $\hard$ and soft clauses
$\soft$, where:

\begin{itemize}

\item $\hard$ results from replacing the non-linear monomials in
  $\hard_0$ by their corresponding fresh variables, and adding the
  case splitting clauses;

\item $\soft$ results from replacing the non-linear monomials in
  $\soft_0$ by their corresponding fresh variables.

\end{itemize}

Now notice that there are two kinds of weights: those from the
original soft clauses, and those introduced in the linearization. As
they have different meanings, it is convenient to consider them
separately. Thus, given an assignment $\alpha$, we define its
\emph{(total) cost} as
$\cost(\alpha) = (\cost_\bounds(\alpha), \cost_\soft(\alpha))$, where
$\cost_\bounds(\alpha) = \sum\{\weight \st [b, \weight] \in \bounds, \alpha \not \models b\}$
is the \emph{bound cost}, i.e., the contribution to the total cost due to artificial bounds,
and
$\cost_\soft(\alpha) = \sum\{\Weight \st [D, \Weight] \in \soft, \alpha \not \models D\}$
is the \emph{soft cost}, corresponding to the original soft clauses.
Equivalently, if weights are written as pairs, so that artificial
bound clauses become of the form $[C, (\weight, 0)]$ and soft clauses
become of the form $[C, (0, \Weight)]$, we can write
$\cost(\alpha) = \sum\{ (\weight, \Weight) \st [C, (\weight, \Weight)] \in S \cup \bounds, \alpha \not \models C\} ,$
where the sum of the pairs is component-wise.
In what follows, pairs $(\cost_\bounds(\alpha), \cost_\soft(\alpha))$
will be lexicographically compared, so that the bound cost (which
measures the consistency with respect to the theory of QF-NIA) is more relevant than the soft
cost. Hence, by taking this cost function and this ordering, we have a
\MAXSMT(QF-LIA) instance in which weights are not natural or
non-negative real numbers, but pairs of them.

In the following step of the algorithm $\solveMaxSMTQFNIA$, the procedure
$\optimizeQFLIAMaxSMTthreshold$ (described in Algorithm
\ref{alg:optimize_QF_LIA:maxsmt_threshold}) dispatches this \maxsmt
instance. This procedure is like that presented in Algorithm
\ref{alg:optimize_QF_LIA:maxsmt}, with the only difference that now a
parameter $\msc$ is passed to the \MAXSMT(QF-LIA) solver. This
parameter restrains the models of the hard clauses the solver will
consider: only assignments $\alpha$ such that
$\cost_\soft(\alpha) \leq \msc$ will be taken into account. That is,
this adapted \maxsmt solver computes, among the models $\alpha$ of the
hard clauses such that $\cost_\soft(\alpha) \leq \msc$ (if any), one
that minimizes $\cost(\alpha)$. Thus, the search can be pruned when it
is detected that it is not possible to improve the best soft cost
found so far. This adjustment is not difficult to implement if the
\maxsmt solver follows a branch-and-bound scheme (see Section
\ref{sec:maxsmt_la}), as it is our case.

Then the algorithm examines the result of the call to the \maxsmt
solver. If it is $\Unsat$, then there are no models of the hard
clauses with soft cost at most $\msc$. Therefore, the algorithm can
stop and report the best solution found so far, if any.

Otherwise, $\model$ satisfies the hard clauses and has soft cost at
most $\msc$. If it has null bound cost, and hence is a true model of
the hard clauses of the original formula, then the best solution found
so far and $\msc$ are updated, in order to search for a solution with
better soft cost. Finally, if the bound cost is not null, then domains
are relaxed as described in Section \ref{sec:maxsmt_la}, in order to
widen the search space. In any case, the algorithm jumps back and a
new iteration is performed.

For the sake of simplicity, Algorithm \ref{alg:maxsmt_nia} returns
$\langle\Unknown, \undef\rangle$ when time is exhausted. However, the
best model found so far $\bsf$ can also be reported, as it can still
be useful in practice.

The following theorem states the correctness of procedure
$\solveMaxSMTQFNIA$:

\begin{theorem}
  Procedure $\solveMaxSMTQFNIA$ is correct. That is, given a weighted formula
  $F_0$ in QF-NIA with hard clauses $H_0$ and soft clauses $S_0$:

  \begin{enumerate}

  \item if $\solveMaxSMTQFNIA$($F_0$) returns
    $\langle\Sat,\model\rangle$ then $H_0$ is satisfiable, and
    $\model$ is a model of $H_0$ that minimizes the sum of the weights
    of the falsified clauses in $S_0$; and

    \smallskip

  \item if $\solveMaxSMTQFNIA$($F_0$) returns
    $\langle\Unsat,\undef\rangle$ then $H_0$ is unsatisfiable.

  \end{enumerate}

\end{theorem}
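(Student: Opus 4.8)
My plan is to follow the structure of the proof of Theorem~\ref{th:correctness}, adding the bookkeeping needed to handle the soft cost and the threshold $\msc$. I would first record two facts about the linearization. \textbf{Fact (i):} if $\model \models H$ and $\cost_\bounds(\model) = 0$ (so $\model$ respects all artificial bounds), then the restriction of $\model$ to the original variables is a model of $H_0$ and the fresh variables take the true monomial values; consequently $\model$ falsifies a linearized soft clause $D_i' \in S$ exactly when it falsifies the corresponding $D_i \in S_0$, so $\cost_\soft(\model)$ equals the objective value $\sum\{\Weight_i \st \model \not\models D_i\}$ of $\model$ for the original \MAXSMT(QF-NIA) instance (this is the reasoning in the proof of Lemma~\ref{lemma:optimize_QF_LIA:maxsmt}). \textbf{Fact (ii):} conversely, every model $\alpha_0$ of $H_0$ extends to a model $\alpha$ of $H$ by assigning each fresh variable the value of the monomial it abstracts under $\alpha_0$; this extension satisfies all case-splitting clauses, and again $\alpha \not\models D_i'$ iff $\alpha_0 \not\models D_i$, so $\cost_\soft(\alpha)$ equals the objective value of $\alpha_0$. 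Finally, I would use the specification of $\optimizeQFLIAMaxSMTthreshold$: since the underlying \MAXSMT(QF-LIA) call treats the artificial bounds as soft, a call with arguments $H$, $S$, $B$, $\msc$ returns $\Unsat$ precisely when $H$ has no model $\alpha$ with $\cost_\soft(\alpha) \le \msc$, and otherwise returns some model $\alpha$ of $H$ with $\cost_\soft(\alpha) \le \msc$.

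\emph{Part (2).} The procedure returns $\langle\Unsat,\undef\rangle$ only right after a call to $\optimizeQFLIAMaxSMTthreshold$ returned $\Unsat$ with $\bsf = \undef$. Since $\bsf$ becomes different from $\undef$ precisely in the iteration where $\msc$ is first changed away from $\infty$, at this point $\msc$ is still $\infty$, so the call certifies that $H$ has no model at all. By Fact~(ii) any model of $H_0$ would extend to a model of $H$; hence $H_0$ is unsatisfiable.

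\emph{Part (1).} The procedure returns $\langle\Sat,\bsf\rangle$ only right after a call returned $\Unsat$ with $\bsf \neq \undef$. Then at some earlier iteration $\bsf$ was set to a model $\model$ of $H$ with $\cost_\bounds(\model) = 0$ and, in the same iteration, $\msc$ was set to $\cost_\soft(\model) - 1$; since $\bsf$ and $\msc$ are only ever reset jointly (always with $\cost_\bounds = 0$ and $\msc = \cost_\soft(\bsf) - 1$), the invariant $\msc = \cost_\soft(\bsf) - 1$ holds at the final call. By Fact~(i), $\bsf$ restricted to the original variables is a model of $H_0$ whose objective value for the original instance is $\cost_\soft(\bsf) = \msc + 1$; in particular $H_0$ is satisfiable. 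To show optimality, suppose some $\alpha_0 \models H_0$ had objective value strictly below $\cost_\soft(\bsf)$. By Fact~(ii) it extends to a model $\alpha$ of $H$ with $\cost_\soft(\alpha) < \cost_\soft(\bsf) = \msc + 1$, hence $\cost_\soft(\alpha) \le \msc$ (using that soft costs are natural numbers, as assumed in Algorithm~\ref{alg:maxsmt_nia}). But then $\alpha$ contradicts the fact that the final call returned $\Unsat$. Therefore $\bsf$ minimizes the sum of the weights of the clauses of $S_0$ that are falsified, over all models of $H_0$.

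The step I expect to require the most care is the loop-invariant reasoning in Part~(1): one must verify that whenever $\bsf \neq \undef$ the relation $\msc = \cost_\soft(\bsf) - 1$ is maintained across the domain-relaxation iterations (which modify $H$ and $B$ but touch neither $\bsf$ nor $\msc$) as well as across updates of $\bsf$, so that the terminating $\Unsat$ answer genuinely rules out \emph{all} models of $H_0$ with smaller objective value, and not merely those below a stale threshold. The other ingredients — admissibility of the bound cost and the extendability of models of $H_0$ to models of $H$ — are essentially those already established for Theorem~\ref{th:correctness} and Lemma~\ref{lemma:optimize_QF_LIA:maxsmt}.
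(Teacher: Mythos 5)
Your proposal is correct and follows essentially the same route as the paper's own proof: conclude satisfiability of $H_0$ from a null-bound-cost model via the case-splitting clauses, and conclude optimality (resp.\ unsatisfiability) from the final $\Unsat$ answer of $\optimizeQFLIAMaxSMTthreshold$ together with the fact that every model of $H_0$ extends to a model of $H$ with the same soft cost. You merely make explicit two points the paper leaves implicit --- the invariant $\msc = \cost_\soft(\bsf)-1$ across iterations and the use of integrality of the soft costs to pass from ``strictly smaller'' to ``$\leq \msc$'' --- which is a faithful elaboration rather than a different argument.
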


\begin{proof}

    Let us assume that $\solveMaxSMTQFNIA$($F_0$) returns
    $\langle\Sat,\model\rangle$. The assignment $\model$ is different
    from $\undef$, and hence has been previously computed in a call to
    $\optimizeQFLIAMaxSMTthreshold$($H$, $S$, $B$, $\msc$) such that
    $\cost_B(M) = 0$. So $\model$ respects all artificial bounds in $B$.
    Thanks to the case splitting clauses in $H$, this ensures that
    auxiliary variables representing non-linear monomials have the right
    values. Therefore $\model$ satisfies $H_0$, which is what we wanted
    to prove. Now we just need to check that indeed $\model$ minimizes
    the sum of the weights of the falsified clauses in $S_0$. Notice
    that, from the specification of $\optimizeQFLIAMaxSMTthreshold$, we
    know that there is no model of $H$ such that its soft cost is
    strictly less than $\cost_\soft(\model)$. Now let $\model'$ be a
    model of $H_0$. By extending $\model'$ so that auxiliary variables
    representing non-linear monomials are assigned to their
    corresponding values, we have $\model' \models H$. By the previous
    observation,
    $\cost_{\soft_0}(\model') = \cost_\soft(\model') \geq
    \cost_\soft(\model) = \cost_{\soft_0}(\model)$.

    Now let us assume that $\solveMaxSMTQFNIA$($F_0$) returns
    $\langle\Unsat,\undef\rangle$. Let us also assume that there
    exists $\model'$ a model of $H_0$, and we will get a
    contradiction. Indeed, again extending $\model'$ as necessary, we
    have that $\model' \models H$. If $\solveMaxSMTQFNIA$($F_0$)
    returns $\langle\Unsat,\undef\rangle$, then the previous call to
    $\optimizeQFLIAMaxSMTthreshold$($H$, $S$, $B$, $\msc$) has returned
    $\langle\Unsat, \undef\rangle$, and moreover no previous call to
    $\optimizeQFLIAMaxSMTthreshold$ has produced a model with null
    bound cost. This means that $\msc$ has not changed its initial
    value, namely $\infty$. Therefore $H$ must be unsatisfiable, a
    contradiction.
\end{proof}

\begin{example}
Let $F_0$ be the weighted formula with hard clauses
$$H_0 \equiv tx + y \,\geq\, 4 \,\;\land\;\, t^2 + x^2 + w^2 + y^2 \,\leq\, 12 $$
(the same of previous examples) and a single soft clause
$$S_0 \equiv [ t^2 + x^2 + y^2 \leq 1,\; 1].$$

\noindent
Let us take $-1 \leq t, x, w, y \leq 1$ as artificial bounds.
After linearization, we get a weighted linear formula with hard clauses:

\begin{center}

  $
  H \equiv \left(
\begin{array}{ll}
v_{tx} + y \geq 4 \;\land\; v_{t^2} + v_{x^2} + v_{w^2} + v_{y^2} \leq 12 \;\;\land\medskip\\
\begin{array}{lcl c lcl c}
( t =-1 & \limplies & v_{tx} = -x   ) & \;\;\;\land\;\;\; &  \\
( t = 0 & \limplies & v_{tx} =  0   ) & \;\;\;\land\;\;\; &  \\
( t = 1 & \limplies & v_{tx} =  x   ) & \;\;\;\land\;\;\; &  \\
\\
( t =-1 & \limplies & v_{t^2} =  1  ) & \;\;\;\land\;\;\; &  ( w =-1 & \limplies & v_{w^2} =  1  ) & \;\;\land\\
( t = 0 & \limplies & v_{t^2} =  0  ) & \;\;\;\land\;\;\; &  ( w = 0 & \limplies & v_{w^2} =  0  ) & \;\;\land\\
( t = 1 & \limplies & v_{t^2} =  1  ) & \;\;\;\land\;\;\; &  ( w = 1 & \limplies & v_{w^2} =  1  ) & \;\;\land\\
\\
( x =-1 & \limplies & v_{x^2} =  1  ) & \;\;\;\land\;\;\; &  ( y =-1 & \limplies & v_{y^2} =  1  ) & \;\;\land\\
( x = 0 & \limplies & v_{x^2} =  0  ) & \;\;\;\land\;\;\; &  ( y = 0 & \limplies & v_{y^2} =  0  ) & \;\;\land\\
( x = 1 & \limplies & v_{x^2} =  1  ) & \;\;\;\land\;\;\; &  ( y = 1 & \limplies & v_{y^2} =  1  ) & \;\;\land\\
\end{array}
\end{array}
\right)
$
\end{center}

\noindent
and soft clauses

\begin{center}
  $
  \begin{array}{l}
S \equiv [v_{t^2} + v_{x^2} + v_{y^2} \leq 1,\; (0, 1)] \medskip \\
B \equiv \left(\begin{array}{rcrc} \\
    \left[-1 \leq t,\; (1, 0)\right]  & \;\land\; &  \left[ t \leq 1,\; (1, 0)\right] & \;\land \smallskip\\
    \left[-1 \leq x,\; (1, 0)\right]  & \;\land\; &  \left[ x \leq 1,\; (1, 0)\right] & \;\land \smallskip\\
    \left[-1 \leq w,\; (1, 0)\right]  & \;\land\; &  \left[ w \leq 1,\; (1, 0)\right] & \;\land \smallskip\\
    \left[-1 \leq y,\; (1, 0)\right]  & \;\land\; &  \left[ y \leq 1,\; (1, 0)\right] &
  \end{array}\right),
  \end{array}
  $
\end{center}
where weights are already represented as pairs
$(\mathrm{bound\ cost}, \mathrm{soft\ cost})$ as explained above.

In the first call to $\optimizeQFLIAMaxSMTthreshold$($H$, $S$, $B$,
$\infty$), the optimal cost is $(1, 0)$. An assignment with this cost
that may be returned is, for example, $t = v_{t^2} = 1$,
$x = v_{tx} = 4$ and $w = v_{w^2} = y = v_{y^2} = v_{x^2} = 0$, the
same in as Example \ref{ex:maxsmt}. In this assignment, the only soft
clause that is violated is $[ x \leq 1, (1, 0)]$.

Since the bound cost is not null, new artificial bounds should be
introduced. Following Example \ref{ex:maxsmt}, the new upper bound for
$x$ becomes $x \leq 4$. Hence, the soft clause
$[ x \leq 1, (1, 0)]$ is replaced by $[ x \leq 4, (1, 0)]$), and the
following hard clauses are added:

\begin{center}
$
\begin{array}{lcl}
x = 2 & \limplies & v_{x^2} = 4\\
x = 3 & \limplies & v_{x^2} = 9\\
x = 4 & \limplies & v_{x^2} = 16\\
\end{array}
$
\end{center}

The following call to $\optimizeQFLIAMaxSMTthreshold$ returns an
assignment with cost $(0, 1)$, e.g.,
$t = v_{t^2} = w = v_{w^2} = y = v_{y^2} = 1$, $x = v_{tx} = 3$, $1$
and $v_{x^2} = 9$. Since the bound cost is null, this assignment is
recorded as the best model found so far and $\msc$ is set to $0$.
This forces that, from now on, only solutions with null soft cost are
considered, i.e., the soft clause $v_{t^2} + v_{x^2} + v_{y^2} \leq 1$
must hold. Since $t^2 + x^2 + y^2 \leq 1$ implies
$|t|, |x|, |y| \leq 1$, which contradicts $tx + y \geq 4$, there is
actually no solution of cost $(0, 0)$. Hence next calls to
$\optimizeQFLIAMaxSMTthreshold$ will unsuccessfully look for
non-linear models with null soft cost. If additional case splitting
clauses considering values outside the finite domain are introduced,
such as for instance
\begin{center}
$
\begin{array}{lcl}
x \leq -2 & \limplies & v_{x^2} \geq 4 \\
x \geq 5 & \limplies & v_{x^2} \geq 25\,, \\
\end{array}
$
\end{center}
unsatisfiability will be detected by $\optimizeQFLIAMaxSMTthreshold$.
Then procedure $\solveMaxSMTQFNIA$ will terminate reporting that the
minimum cost (with respect to the original soft clauses $S_0$) is $1$,
and that a model with that cost is given by $t = w = y = 1$ and
$x = 3$. \hfill$\blacksquare$
\end{example}

\subsection{Experimental Evaluation}
\label{sec:maxsmt_nia_experiments}

In this section we evaluate experimentally the approach proposed in
Section \ref{sec:maxsmt_nia_algorithm} for solving \MAXSMT(QF-NIA). We
adapt the method to each of the three \MAXSMT(QF-LIA)-based variants
for solving \SMT(QF-NIA). Thus, following the same names as in Section
\ref{sec:smt_nia_experiments_maxsmt}, here we consider the solvers
\barcelogicMaxsmt, \barcelogicJmpNoCores and \barcelogicJmpCores. We
also include in the experiments \z3, which is the only competing tool
that, up to our knowledge, can handle \MAXSMT(QF-NIA) too. As regards
benchmarks, we use the original \MAXSMT(QF-NIA) versions (that is,
keeping soft clauses) of the examples \textsf{Integer Transition
  Systems} and \textsf{C Integer} employed in Section
\ref{sec:smt_nia_experiments_maxsmt}.

\begin{center}
  \begin{table}%

  \tbl{Experimental evaluation of \MAXSMT(QF-NIA) solvers on benchmark family \textsf{Integer Transition
  Systems} (20354 benchmarks).\label{tab:maxsmt_nia_experiments_its}}{%

  \begin{tabular}{|c|r|r||r|r||r|r||r|r|}
   \cline{2-9}

  \multicolumn{1}{c|}{} & \multicolumn{2}{c||}{\barcelogicMaxsmt}   & \multicolumn{2}{c||}{\barcelogicJmpNoCores} & \multicolumn{2}{c||}{\barcelogicJmpCores} & \multicolumn{2}{c|}{\z3} \\ \cline{2-9}
  \multicolumn{1}{c|}{} & \# p. & time & \# p. & time & \# p. & time & \# p. & time \\ \hline

$\Unsat$        & \b{2,618}  &  \b{32,947.80} & 2,490 &  49,351.13 & 2,573 &  46,750.26 & 2,571 & 624,204.13 \\ \hline
$\Optimal$      & \b{7,644}  & \b{449,806.47} & 6,720 & 174,062.35 & 6,908 & 228,974.31 &     0 &       0.00 \\ \hline
$\Optimal+\Sat$ & \b{8,311}  & \b{490,204.00} & 7,390 & 218,202.00 & 7,583 & 276,237.00 & 2,165 & 652,295.00 \\ \hline
  \end{tabular}
}
\end{table}
\end{center}

\begin{center}
  \begin{table}%

  \tbl{Experimental evaluation of \MAXSMT(QF-NIA) solvers on benchmark family \textsf{C Integer} (2019 benchmarks).\label{tab:maxsmt_nia_experiments_cint}}{%

  \begin{tabular}{|c|r|r||r|r||r|r||r|r|}
   \cline{2-9}

  \multicolumn{1}{c|}{} & \multicolumn{2}{c||}{\barcelogicMaxsmt}   & \multicolumn{2}{c||}{\barcelogicJmpNoCores} & \multicolumn{2}{c||}{\barcelogicJmpCores} & \multicolumn{2}{c|}{\z3} \\ \cline{2-9}
  \multicolumn{1}{c|}{} & \# p. & time & \# p. & time & \# p. & time & \# p. & time \\ \hline

$\Unsat$        &   144 &   9,027.27 &   121 &   3,993.28 &   136  &      8,930.16  &\b{257} & \b{7,855.24} \\ \hline
$\Optimal$      &   453 &   9,090.26 &   466 &   9,177.07 &\b{469} &  \b{10,768.57} &     0  &        0.00 \\ \hline
$\Optimal+\Sat$ &   522 &   9,108.00 &   535 &   9,194.00 &\b{539} &  \b{10,797.00} &   207  &   23,579.00 \\ \hline

  \end{tabular}
}
\end{table}
\end{center}

Tables \ref{tab:maxsmt_nia_experiments_its} and
\ref{tab:maxsmt_nia_experiments_cint} show the results of the
experiments on the families \textsf{Integer Transition Systems} and
\textsf{C Integer}, respectively. In each table, row $\Unsat$
indicates the number of instances that were proved to be
unsatisfiable, and row $\Optimal$ counts the instances for which
optimality of the reported model could be established. A third row
$\Optimal+\Sat$ adds to row $\Optimal$ the number of problems in which
a model was found, but could not be proved to be optimal. For the sake
of succinctness, as in previous tables other outcomes (timeouts,
$\Unknown$ answer, etc.) are not made explicit. Columns represent
systems and show either the number of problems that were solved with a
certain outcome, or the total time (in seconds) to process all of
them. The best solver in each case is highlighted in bold face.

From the tables it can be observed that \barcelogicJmpCores is more
effective than \barcelogicJmpNoCores for \MAXSMT. This is natural:
proving the optimality of the best model found so far implicitly
involves proving unsatisfiability, more precisely that there cannot be
a model with a better cost. And as was already remarked in Section
\ref{sec:smt_nia_experiments_maxsmt}, optimality cores help the
non-incremental approach to detect unsatisfiability more quickly.
Regarding the incremental approach, the results are inconclusive:
depending on the benchmarks, \barcelogicMaxsmt may perform better than
\barcelogicJmpCores, or the other way around. Finally, \z3 is
competitive or even superior when dealing with unsatisfiable problems,
while it significantly lags behind for the rest of the instances.

\section{Solving \smt  and \MAXSMT(\ea)}
\label{sec:maxsmt_ea}

In this section we will extend our techniques for \smt and
\MAXSMT(QF-NIA) to the theory of \ea. In this fragment of first-order
logic, formulas are of the form $\exists x \, \forall y \, F(x,y)$,
where $F$ is a quantifier-free formula whose literals are polynomial
inequalities. Moreover, the existentially quantified variables have
integer type, whereas the universally quantified ones are real. In
particular we will focus on a subset of this logic, namely, those
formulas in which monomials never contain the product of two
universally quantified variables.

This fragment of quantified non-linear arithmetic is relevant to many applications.
For example, it
appears in verification and synthesis problems
when the so-called \emph{template-based method}
\cite{ColonSipma2002CAV} is employed. In this framework, one attempts
to discover an object of interest (e.g., an invariant, or a ranking
function) by introducing a \emph{template}, usually a linear
inequality or expression, and solving a formula that represents the
conditions the object should meet. For instance, let us find an
invariant for the next loop:

\begin{center}
@double y = 0; while (y <= 2) y = y+1;@
\end{center}

\noindent
A loop invariant $I(y)$ must satisfy the following \emph{initiation}
and \emph{inductiveness} conditions:

\begin{itemize}
\item \textbf{Initiation:}\hspace*{1.05cm} $\forall y_0 \;\; (y_0 = 0 \;\limplies\; I(y_0))$
  \smallskip
\item \textbf{Inductiveness:}\quad $\forall y_1, y_2 \;\; (I(y_1) \;\land\; y_1 \leq 2 \;\land\; y_2 = y_1+1 \;\,\limplies\;\, I(y_2))$
\end{itemize}

Now a linear template $x_0\, y \leq x_1$ is introduced as a
candidate for $I(y)$, where $x_0$, $x_1$ are unknowns and $y$ is the
program variable. Then the conditions needed for $I(y)$ to be an
invariant can be expressed in terms of template unknowns and program
variables as an $\exists\forall$ formula:
$$
\begin{array}{rll}
\exists x_0, x_1\;\, \forall y_0, y_1, y_2\, & \big( (y_0 = 0 \;\limplies\; x_0\, y_0 \leq x_1) \;\land \\
& (x_0\, y_1 \leq x_1 \;\land\; y_1 \leq 2 \;\land\; y_2 = y_1+1 \,\;\limplies\;\, x_0\, y_2 \leq x_1) \big)\\
\end{array}
$$
This falls into the logical fragment considered here. Indeed note
that, since the template is linear, the non-linear monomials in the
formula always consist of the product of a template unknown and a
program variable. Moreover, we can regard that we are interested in
integer coefficients, so the existential variables are integers, while
the universal variables are reals, since the program variable is a
\texttt{double}. On the other hand, if one is interested in finding
models with other type patterns, the following can be taken into
account: in general, if a formula
$$\exists x \in \integers \;\; \forall y \in \reals \;\; F(x,y)$$ is
satisfiable, then so are
\begin{itemize}
\item $\exists x \in \reals \;\; \forall y \in \reals \;\; F(x,y),$
\item $\exists x \in \integers \;\; \forall y \in \integers \;\; F(x,y),$
\item $\exists x \in \reals \;\; \forall y \in \integers \;\; F(x,y),$
\end{itemize}
since the same witness $x$ can be taken.

\subsection{Algorithm}
\label{sec:maxsmt_ea_algorithm}

Let us first describe how to deal with the satisfiability problem given
a formula $\exists x \, \forall y \, F(x,y)$, and then the technique
will extend to the more general \MAXSMT(\ea) problem naturally. Note
that the requirement that monomials cannot contain the product of two
universal variables allows writing the literals in $F$ as linear
polynomials in variables $y$, i.e., in the form
$a_{1}(x)\, y_1 + \cdots + a_{n}(x)\, y_n \leq b(x)$. Hence, if for
instance $F$ is a clause, we can write it as
$$\neg\Big(\bigwedge_{i=1}^{m} a_{i1}(x)\, y_1 + \cdots + a_{in}(x)\, y_n \leq b_i(x) \;\land\; \bigwedge_{j=1}^{l} c_{j1}(x)\, y_1 + \cdots + c_{jn}(x)\, y_n < d_j(x)\Big),$$
or more compactly using matrix notation as $\neg\Big(A(x)\, y \leq b(x) \;\land\; C(x)\, y < d(x)\Big)$. 

The key idea (borrowed from \cite{ColonSipma2002CAV}\footnote{In 
\cite{ColonSipma2002CAV}, Farkas' Lemma is used instead of the generalization 
presented here.}) is to apply the
following result from polyhedral geometry to eliminate the quantifier
alternation and transform the problem into a purely existential one:

\begin{theorem} [Motzkin's Transposition Theorem \cite{Schrijver}]
Let $A \in \reals^{m \times n}$, $y \in \reals^n$, $b \in \reals^m$,
$C \in \reals^{l \times n}$ and $d \in \reals^l$. The system
$A(x)\, y \leq b(x)$ $\,\land\, C(x)\, y < d(x)$ is unsatisfiable 
if and only if there are $\lambda \in \reals^m$ and $\mu \in \reals^l$
such that $\lambda \geq 0$, $\mu \geq 0$, $\lambda^T A(x) + \mu^T C(x) = 0$,
$\lambda^T b(x) + \mu^T d(x) \leq 0$, and either $\lambda^T b(x) < 0$ or $\mu \neq 0$.
\end{theorem}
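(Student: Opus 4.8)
The plan is to prove the two implications separately. Throughout, $x$ is a fixed parameter, so that $A(x),b(x),C(x),d(x)$ denote an ordinary real matrix/vector tuple; I abbreviate them $A,b,C,d$ and suppress $x$. For the \emph{soundness} direction (the certificate implies unsatisfiability), suppose $\lambda\ge 0$, $\mu\ge 0$ satisfy $\lambda^T A+\mu^T C=0$, $\lambda^T b+\mu^T d\le 0$, and $\lambda^T b<0$ or $\mu\ne 0$, and suppose for contradiction that some $y$ satisfies $Ay\le b$ and $Cy<d$. Taking the corresponding nonnegative combination of these inequalities and using $\lambda^T A+\mu^T C=0$ gives
\[
  0 \;=\; (\lambda^T A+\mu^T C)\,y \;=\; \lambda^T(Ay)+\mu^T(Cy) \;\le\; \lambda^T b+\mu^T d \;\le\; 0,
\]
so every inequality used must be tight. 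If $\mu\ne 0$, pick $j$ with $\mu_j>0$; since $(Cy)_j<d_j$, the step $\mu^T(Cy)\le\mu^T d$ is in fact strict, a contradiction. If $\mu=0$ then $\lambda^T b<0$, and $0=\lambda^T(Ay)\le\lambda^T b<0$, again a contradiction. Hence no such $y$ exists.

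For the \emph{completeness} direction (unsatisfiability implies a certificate), which is the substantive part, I would use Fourier--Motzkin elimination with multiplier bookkeeping. Assuming $Ay\le b\wedge Cy<d$ is unsatisfiable, eliminate $y_1,\dots,y_n$ one at a time: to eliminate $y_i$, pair each inequality in which $y_i$ has a positive coefficient with each in which it has a negative coefficient, forming the nonnegative combination that cancels $y_i$, and keep the inequalities not containing $y_i$. For every inequality ever produced, track (a) the nonnegative weights of the original rows of $Ay\le b$ and $Cy<d$ that generate it, and (b) whether it is non-strict or strict, using the rule that a derived inequality is strict precisely when some strict original row enters with positive weight. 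After all $n$ variables are gone, one is left with a finite list of variable-free inequalities of the form ``$-(\lambda^T b+\mu^T d)\le 0$'' or ``$-(\lambda^T b+\mu^T d)<0$'' with $\lambda,\mu\ge 0$ and, by construction of the cancellations, $\lambda^T A+\mu^T C=0$. Unsatisfiability of the original system is equivalent to one of these derived inequalities being numerically false: either a non-strict one with $\lambda^T b+\mu^T d<0$, or a strict one with $\lambda^T b+\mu^T d\le 0$ and, by the bookkeeping (b), $\mu\ne 0$. In both cases $\lambda,\mu$ are the desired multipliers; and when $\mu=0$ one has $\lambda^T b+\mu^T d=\lambda^T b$, so the disjunction collapses to exactly ``$\lambda^T b<0$ or $\mu\ne 0$.'' (An alternative is to reduce to Farkas' Lemma / LP duality after first splitting off the case $\{y:Ay\le b\}=\emptyset$, encoding the strict block by a single fresh slack variable $\varepsilon$ with $Cy+\varepsilon\mathbf{1}\le d$, $\varepsilon\ge 0$, and arguing that $\max\varepsilon$ over this polyhedron is non-positive; reading off dual multipliers yields the same certificate.)

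The only real content is the completeness direction, and the delicate point there is the bookkeeping distinguishing strict from non-strict derived inequalities: one must verify that this label propagates correctly through the pairwise eliminations, so that the contradictory derived inequality is strict \emph{exactly} when a strict original row was genuinely used with positive weight, which is what produces the disjunct $\mu\ne 0$ rather than merely $\lambda^T b+\mu^T d\le 0$. The attendant degenerate cases --- $\{y:Ay\le b\}$ already empty, or $Ay\le b\wedge Cy\le d$ feasible while the strict system is not --- also need checking, but they are handled uniformly by the elimination. Everything else is routine linear algebra.
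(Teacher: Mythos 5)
The paper does not prove this statement at all: it is imported verbatim from the literature (Schrijver's book) and used only as a black box to eliminate the universal quantifier, so there is no in-paper argument to compare yours against. Judged on its own, your proof is correct. The soundness direction is complete as written: the nonnegative combination forces every inequality in the chain to be tight, and either a positive $\mu_j$ against a strict row or $\mu=0$ together with $\lambda^T b<0$ yields the contradiction. The completeness direction via Fourier--Motzkin elimination with multiplier bookkeeping is the classical route (essentially Motzkin's original argument), and you identify the one genuinely delicate invariant correctly: a derived inequality is strict exactly when its accumulated $\mu$-part is nonzero, which is preserved under pairwise combination with positive coefficients and is precisely what matches the open/half-open interval condition for the existence of the eliminated variable, producing the disjunct ``$\mu\neq 0$'' rather than only ``$\lambda^T b+\mu^T d\le 0$''. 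Your parenthetical alternative (a slack variable $\varepsilon$ plus Farkas/LP duality) is closer to how Schrijver himself derives the theorem from Farkas-type corollaries; note only that with the constraint $\varepsilon\ge 0$ the auxiliary polyhedron is empty whenever even the non-strict relaxation $Ay\le b\wedge Cy\le d$ is infeasible, so that branch needs one additional Farkas application (which again yields $\lambda^T b+\mu^T d<0$ and hence a valid certificate); your main elimination-based route handles all such degeneracies uniformly, as you observe.
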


Thanks to Motzkin's Transposition Theorem, we have that formulas
$$\exists x \; \forall y \; \neg\big(A(x)\, y \leq b(x) \;\land\; C(x)\, y < d(x))$$ and
$$\exists x \, \exists \lambda \, \exists \mu \,\Big( \lambda,\mu \geq 0 
\;\land\; \lambda^T A(x) + \mu^T C(x) = 0 \;\land\; \lambda^T b(x) + \mu^T d(x) \leq 0 \;\land\;
(\lambda^T b(x) < 0 \;\lor\; \mu \neq 0)\Big)$$
are equisatisfiable. In general, if the formula $F$ in
$\exists x \, \forall y \, F(x,y)$ is a CNF, this transformation is
applied locally to each of the clauses with fresh multipliers.

Note that the formula resulting from applying Motzkin's Transposition Theorem is
non-linear, but the existentially quantified variables $\lambda$ and $\mu$ have
real type. Fortunately, our techniques from Section
\ref{sec:domain_relaxation} do not actually require that all variables
are integer: it suffices that there are \emph{enough} finite domain
variables to perform the linearization. And this is indeed the case,
since every non-linear monomial of the transformed formula has at most
one occurrence of a $\lambda$ or a $\mu$ variable, and all other variables are
integer. All in all, we have reduced the problem of satisfiability of
the fragment of \ea under consideration to satisfiability of
non-linear formulas that our approach can deal with.

Finally, regarding \maxsmt the technique extends clause-wise in a
natural way. Given a weighted CNF, hard clauses are transformed using
Motzkin's Transposition Theorem as in the \smt case. As for soft clauses, let
$[S, \Weight]$ be such a clause, where $S$ is of the form
$\neg(A(x)\, y \leq b(x) \,\land\, C(x)\, y < d(x))$. Then a fresh
propositional symbol $p_S$ is introduced, and $[S, \Weight]$ is
replaced by a soft clause $[p_S, \Weight]$ and hard clauses
corresponding to the double implication
$$\Big( \lambda,\mu \geq 0 
\;\land\; \lambda^T A(x) + \mu^T C(x) = 0 \;\land\; \lambda^T b(x) + \mu^T d(x) \leq 0 \;\land\;
(\lambda^T b(x) < 0 \;\lor\; \mu \neq 0)\Big)
\leftrightarrow p_S.$$
Therefore, similarly to satisfiability, we can solve the \maxsmt
problem for the fragment of \ea of interest by reducing it to
instances that can be handled with the techniques presented in Section
\ref{sec:maxsmt_nia}.

\begin{example}
  \label{ex:ea}
  Let us consider again the problem of finding an invariant for the
  loop:

\begin{center}
@double y = 0; while (y <= 2) y = y+1;@
\end{center}

However, now we will make the initiation condition soft, say with
weight $1$, while the inductiveness condition will remain hard (as
done in \cite{DBLP:conf/fmcad/BrockschmidtLOR15}). The rationale is
that, if the initiation condition can be satisfied, then we have a
true invariant; and if it is not, then at least we have a
\emph{conditional invariant}: a property that, if at some iteration
holds, then from that iteration on it always holds.

Using the same template as above, the formula to be solved is
(quantifiers are omitted for the sake of presentation):
$$
\begin{array}{ll}
[y_0 = 0 \;\limplies\; x_0\, y_0 \leq x_1, \; 1] \;\land \\
(x_0\, y_1 \leq x_1 \;\land\; y_1 \leq 2 \;\land\; y_2 = y_1+1 \,\;\limplies\;\, x_0\, y_2 \leq x_1) \\
\end{array}
$$
After moving the right-hand side of the implication to the left, and applying some simplifications, it results into:
$$
\begin{array}{ll}
[0 \leq x_1, \; 1] \;\land \\
\neg(x_0\, y_1 \leq x_1 \;\land\; y_1 \leq 2 \,\;\land\;\, x_0\, (y_1+1) > x_1) \\
\end{array}
$$
Now the transformation is performed clause by clause. Since the first
clause $[0 \leq x_1, \; 1]$ does no longer contain universally
quantified variables, it can be left as it is. As regards the second
one, we introduce three fresh multipliers $\lambda_1$, $\lambda_2$, and $\mu$
and replace
$$\neg(x_0\, y_1 \leq x_1 \;\land\; y_1 \leq 2 \,\;\land\;\, x_0\, (y_1+1) > x_1)$$
by
$$
\begin{array}{c}
\Big(\lambda_1 \geq 0 \land \lambda_2 \geq 0 \land \mu \geq 0 \land 
\lambda_1 x_0 + \lambda_2 - \mu x_0 = 0  \;\;\land\\
\lambda_1 x_1 + 2\lambda_2 + \mu (x_0 - x_1) \leq 0 \land 
(\lambda_1 x_1 + 2\lambda_2 < 0 \lor \mu \neq 0)\Big)
\end{array}
$$
All in all, the following \maxsmt instance must be solved:
$$
\begin{array}{c}
[0 \leq x_1, \; 1] \;\;\land
\smallskip\\
\Big(\lambda_1 \geq 0 \land \lambda_2 \geq 0 \land \mu \geq 0 \land 
\lambda_1 x_0 + \lambda_2 - \mu x_0 = 0  \;\;\land\\
\lambda_1 x_1 + 2\lambda_2 + \mu (x_0 - x_1) \leq 0 \land 
(\lambda_1 x_1 + 2\lambda_2 < 0 \lor \mu \neq 0)\Big)
\end{array}
$$
There exist many solutions with cost $0$, each of them corresponding
to a loop invariant; for instance,
$x_0 = 1, x_1 = 3, \lambda_1 = 0, \lambda_2 = 1, \mu = 1$ (which represents the
invariant @y <= 3@). \hfill$\blacksquare$
\end{example}

\subsection{Experimental Evaluation}
\label{sec:maxsmt_ea_experiments}

In this section we evaluate experimentally the approach proposed in
Section \ref{sec:maxsmt_ea_algorithm} for solving \MAXSMT(\ea).
Similarly to Section \ref{sec:maxsmt_nia_experiments}, again we
instantiate the method for the three \MAXSMT(QF-LIA)-based variants
for solving \SMT(QF-NIA). So, using the same names as in Sections
\ref{sec:smt_nia_experiments_maxsmt} and
\ref{sec:maxsmt_nia_experiments}, in this evaluation we consider the
solvers \barcelogicMaxsmt, \barcelogicJmpNoCores and
\barcelogicJmpCores. Unfortunately, as far as we know no competing
tool can handle the problems of \MAXSMT(\ea) effectively. Hence, we
have to limit our experiments to our own tools.

Regarding benchmarks, again we use the weighted formulas of the
families \textsf{Integer Transition Systems} and \textsf{C Integer},
employed in Section \ref{sec:maxsmt_nia_experiments}. However, here
problems are expressed in \MAXSMT(\ea) rather than in \MAXSMT(NIA);
that is, Motzkin's Transposition Theorem is applied silently inside
the solver, and not in the process of generating the instances.
Moreover, as illustrated in Example \ref{ex:ea}, \MAXSMT(\ea) problems
coming from the application of the template-based method can usually
be simplified, e.g., by using equations to eliminate variables. In
order to introduce some variation with respect to the evaluation in
Section \ref{sec:maxsmt_nia_experiments}, we decided to experiment
with the \MAXSMT(\ea) problems in raw form, without simplifications.
Another difference is that, while in Section
\ref{sec:maxsmt_nia_experiments} multipliers were considered integer
variables (so that purely integer problems were obtained), in this
evaluation they have real type.

\begin{center}
  \begin{table}%

  \tbl{Experimental evaluation of \MAXSMT(\ea) solvers on benchmark family \textsf{Integer Transition
  Systems} (20354 benchmarks).\label{tab:maxsmt_ea_experiments_its}}{%

  \begin{tabular}{|c|r|r||r|r||r|r|}
   \cline{2-7}

  \multicolumn{1}{c|}{} & \multicolumn{2}{c||}{\barcelogicMaxsmt}   & \multicolumn{2}{c||}{\barcelogicJmpNoCores} & \multicolumn{2}{c||}{\barcelogicJmpCores} \\ \cline{2-7}
  \multicolumn{1}{c|}{} & \# p. & time & \# p. & time & \# p. & time \\ \hline

$\Unsat$         &\b{2,196} &    \b{89,259.58} &     2,119  &  121,556.46  &    2,031  &   140,585.27  \\ \hline
$\Optimal$       &\b{6,707} & \b{1,002,816.92} &     5,902  &  405,813.78  &    5,856  &   401,333.33  \\ \hline
$\Optimal+\Sat$  &\b{7,337} & \b{1,071,480.43} &     6,536  &  475,622.68  &    6,485  &   467,898.84  \\ \hline
  \end{tabular}
}
\end{table}
\end{center}

\begin{center}
  \begin{table}%

  \tbl{Experimental evaluation of \MAXSMT(\ea) solvers on benchmark family \textsf{C Integer} (2019 benchmarks).\label{tab:maxsmt_ea_experiments_cint}}{%

  \begin{tabular}{|c|r|r||r|r||r|r|}
   \cline{2-7}

  \multicolumn{1}{c|}{} & \multicolumn{2}{c||}{\barcelogicMaxsmt}   & \multicolumn{2}{c||}{\barcelogicJmpNoCores} & \multicolumn{2}{c||}{\barcelogicJmpCores} \\ \cline{2-7}
  \multicolumn{1}{c|}{} & \# p. & time & \# p. & time & \# p. & time \\ \hline
$\Unsat$         & \b{88} &    \b{10,095.79} &        64  &    1,992.78  &          76  &      2,545.89  \\ \hline
$\Optimal$       &   360  &       11,928.57  &       374  &   13,223.96  &      \b{379} &  \b{15,173.59} \\ \hline
$\Optimal+\Sat$  &   429  &       13,985.45  &       442  &   13,811.42  &      \b{447} &  \b{15,818.40} \\ \hline
  \end{tabular}
}
\end{table}
\end{center}

Results are shown in Tables \ref{tab:maxsmt_ea_experiments_its} and
\ref{tab:maxsmt_ea_experiments_cint}, following the same format as in
Section \ref{sec:maxsmt_nia_experiments}. It is worth noticing that
the number of solved instances is significantly smaller than in Tables
\ref{tab:maxsmt_nia_experiments_its} and
\ref{tab:maxsmt_nia_experiments_cint}, respectively. This shows the
usefulness of the simplifications performed when generating the
\MAXSMT(NIA) instances. Regarding which tool for \MAXSMT(\ea) among
the three is the most powerful, on $\Sat$ instances there is not a
global winner, while on unsatisfiable ones \barcelogicMaxsmt has the
best results for both families.

\section{Conclusions and Future Work}
\label{sec:conclusions}

In this article we have proposed two strategies to guide domain
relaxation in the instantiation-based approach for solving
\SMT(QF-NIA) \cite{Borrallerasetal2011JAR}. Both are based on
computing minimal models with respect to a cost function, namely: the
number of violated artificial domain bounds, and the distance with
respect to the artificial domains. We have experimentally argued that
the former gives better results than the latter and previous
techniques, and have devised further improvements, based on weakening
the invariant that artificial domains should grow monotonically, and
exploiting optimality cores. Finally, we have developed and
implemented algorithms for \MAXSMT(QF-NIA) and for \MAXSMT(\ea),
logical fragments with important applications to program analysis and
termination but which are missing effective tools.

As for future work, several directions for further research can be
considered. Regarding the algorithmics, it would be interesting to
look into different cost functions following the model-guided
framework proposed here, as well as alternative ways for computing
those minimal models (e.g., by means of \emph{minimal correction
  subsets}
\cite{DBLP:conf/ijcai/Marques-SilvaHJPB13,DBLP:conf/ijcai/BjornerN15}).
On the other hand, one of the shortcomings of our instantiation-based
approach for solving \MAXSMT/\SMT(QF-NIA) is that unsatisfiable
instances that require non-trivial non-linear reasoning cannot be
captured. In this context, the integration of real-goaled CAD
techniques adapted to SMT \cite{JovanovicMoura2012IJCAR} as a fallback
or run in parallel appears to be a promising line of work.

Another direction for future research concerns applications. So far we
have applied \MAXSMT/\SMT(QF-NIA/\ea) to array invariant generation
\cite{DBLP:conf/vmcai/LarrazRR13}, safety
\cite{DBLP:conf/fmcad/BrockschmidtLOR15}, termination
\cite{DBLP:conf/fmcad/LarrazORR13} and non-termination
\cite{DBLP:conf/cav/LarrazNORR14} proving. Other problems in program
analysis where we envision these techniques could help in improving
the state-of-the-art are, e.g., the analysis of worst-case execution
time and resource analysis. Also, so far we have only considered
sequential programs. The extension of \MAXSMT-based techniques to
concurrent programs is a promising line of work with a potentially
high impact in the industry.




\begin{acks}
  The authors would like to thank the anonymous referees of the
  conference version of this paper for their helpful comments.
\end{acks}

\bibliographystyle{ACM-Reference-Format-Journals}







\end{document}